\newcolumntype{C}{>{\centering\arraybackslash}X}
\pgfplotsset{compat = newest}
\newtheorem{theo}{\noindent Theorem}
\newtheorem{problem}[theo]{Problem}
\newtheorem{lemma}[theo]{Lemma}
\newtheorem{remark}[theo]{Remark}
\newtheorem{definition}[theo]{Definition}
\newtheorem{example}[theo]{Example}
\begin{document}
%
\title{Asynchronous Distributed Consensus with Minimum Communication}
%
%
%

\author{Vishal~Sawant,
        Debraj~Chakraborty
        and~Debasattam~Pal
\thanks{Vishal Sawant, Debraj Chakraborty (Corresponding author) and Debasattam Pal are with 
the Department of Electrical Engineering, Indian Institute of Technology Bombay, Mumbai - 400076, India.
{\tt\small Email: \{vsawant, dc, debasattam\}@ee.iitb.ac.in}} }
\maketitle

\begin{abstract}
In this paper, the communication effort required in a multi-agent system (MAS) is minimized via an explicit optimization formulation. The paper considers a MAS of single-integrator agents with bounded inputs and a time-invariant communication graph. A new model of discrete asynchronous communication and a distributed consensus protocol based on it, are proposed. The goal of the proposed protocol is to minimize the aggregate number of communication instants of all agents, required to steer the state trajectories inside a pres-specified bounded neighbourhood within a pre-specified time. Due to information structure imposed by the underlying communication graph, an individual agent does not know the global parameters in the MAS, which are required for the above-mentioned minimization. To counter this uncertainty, the worst-case realizations of the global parameters are considered, which lead to min-max type optimizations. The control rules in the proposed protocol are obtained as the closed form solutions of these optimization problems. Hence, the proposed protocol does not increase the burden of run-time computation making it suitable for time-critical applications. 
\end{abstract}

\begin{IEEEkeywords}
Consensus, Distributed Optimization, Communication Cost, Asynchronous Communication  
\end{IEEEkeywords}

%
\IEEEpeerreviewmaketitle

\section{Introduction}\label{Introduction}
In the recent years, multi-agent systems (MASs) have received tremendous attention due to their extensive applications
in unmanned aerial vehicles (UAVs) \cite{Golias}, sensor networks \cite{Hla}, power grids \cite{Merabet}, industrial robotics \cite{Shamma} etc. 
A typical MAS consists of a group of agents which collaborate to achieve desired objectives.
Often, the objective is to achieve \textit{consensus}, i.e., to drive
the states of all agents into an agreement. 
To achieve this objective, agents in MAS need to exchange information with each other over some communication network.
It is well known that communication network is an expensive resource \cite{Antsaklis}, \cite{Nair1}.
Hence, with the intent of reducing communication effort, 
a few approaches have been proposed in \cite{Dimarogonas}, \cite{WenLinear},  \cite{WenSecond} etc.
However, except in our preliminary work \cite{Sawant}, the minimization of communication effort via explicit
optimization, 
has never been addressed.

The problem of achieving consensus of MAS with single-integrator agents was first extensively analyzed in \cite{Jadbabaie}.
After that, various consensus protocols for higher order agents were developed in \cite{Fax}, \cite{Ren}, \cite{Ren_Beard} etc.
These protocols are either based on discrete \textit{synchronous} communication or on continuous communication.
A global synchronization clock is necessary for the implementation of synchronous protocols, which 
is often a major practical constraint \cite{Joshi}.
On the other hand,
the energy consumption of agent's transponder is proportional to the number and duration of transmissions \cite{Corral}. 
Thus, continuous transmission limits the life of agent's battery
and hence, its flight time \cite{Bertran}, \cite{Meyer}.
Second, 
continuous transmissions over a shared bandwidth-limited channel by multiple agents
may lead to congestion of communication channel \cite{Antsaklis}, \cite{Nair1}. 
And finally, MASs such as a group of UAVs are often used for stealthy military applications \cite{Liu}, \cite{Zhao}.
In such scenarios, it is of strategic advantage to keep radio transmissions to a minimum in order to avoid detection by the enemy.
For all these applications, it is necessary to develop an asynchronous/intermittent communication based consensus protocol which minimizes communication 
effort, i.e., the number and/or duration of transmissions.

In order to reduce the required communication effort in a MAS, a few indirect approaches have been proposed in the literature.
In self-triggered control \cite{Tabuada1} based consensus protocols \cite{Johansson1}, \cite{Johansson2},
the next communication instant is pre-computed based on the current state.  
Event-triggered control \cite{Tabuada1} based consensus protocols \cite{Dimarogonas}, \cite{Zhu} initiate communication
only when a certain error state reaches a predetermined threshold.
Intermittent communication based consensus protocols are investigated
in \cite{WenLinear}, \cite{WenSecond} and \cite{WenNonlinear}.
Consensus protocols based on asynchronous information exchange for a MAS with single-integrator agents are developed in
\cite{Cao_discrete},
\cite{Cao}, \cite{Fang} and \cite{Xiao}.
Other work on asynchronous consensus include \cite{Almeida}, \cite{Giannini}, \cite{Cao_sarysmakov} and \cite{Zhan}.
All of the above protocols result in the reduced communication effort as compared to the conventional continuous communication based protocols.
However, there is no explicit minimization of communication effort and hence, the above protocols can result in sub-optimal communication performance.

To overcome this issue, in this paper, we develop a distributed protocol which minimizes the communication effort required to maintain the consensus of single-integrator agents, via explicit optimization. This protocol is based on discrete asynchronous communication. Our notion of consensus is less stringent than the conventional one \cite{Jadbabaie}, in that we only require 
the difference between neighbouring agents' states (i.e., the \textit{local disagreement})
 to reduce below a pre-specified bound.
In the proposed protocol, communication occurs at discrete time instants, namely \textit{update/communication instants}, at which agents 
access the states of neighbouring agents and then based on that information, update their control.
Thus, in the proposed protocol, the number of update instants is
a good measure of communication effort. Hence, our basic objective is to 
minimize the aggregate number of update instants of all agents in the MAS.
Now, intuitively, if the inter-update durations are increased, 
then the number of update instants decreases. However, increase in the inter-update durations 
increases the time required to achieve 
the consensus bound.
Hence, the problem of minimization of the number of update instants is well-posed only when the consensus time
is included in the formulation.
Evidently, the time required for a MAS to achieve consensus depends on the initial configuration of the agents.
Hence, to make the minimization of the number of update instants well-posed,
we require that the local disagreements of all agents be steered below a pre-specified consensus bound
within a pre-specified time, which is expressed as a function of the initial condition of the MAS.

Due to communication structure imposed by the network, 
the above-mentioned minimization is a \textit{decentralized} optimal control \cite{Movric} problem.
Because of the said imposition, an individual agent does not have global information such 
as the number of agents in the MAS, the underlying communication graph, the complete initial condition of the MAS etc.
To counter this lack of global information, we require that the consensus constraint be 
satisfied for any number of agents, any communication graph and any initial condition.
Further, due to the imposed communication structure, 
an individual agent can not predict the control inputs of the neighbouring agents.
In order to guard against the resulting uncertainty, the control inputs in the proposed protocol are 
obtained as a solution of 
certain max-min optimizations (Sections \ref{z_inside} and \ref{z_outside}). 
We obtain the closed form solutions of these optimizations (see (\ref{update_instant})-(\ref{control_out})) and 
hence, extensive numerical computations are not required for their implementation.
This makes the proposed protocol suitable for time-critical applications.

Our contributions in this paper are summarized as follows:
\begin{enumerate}
 \item We develop a discrete asynchronous communication based distributed consensus protocol
 for a MAS with single-integrator agents (Section \ref{proposed}).
 \item The proposed protocol minimizes the aggregate number of update instants under the constraint of steering the local disagreements of all agents below a pre-specified limit
 within a pre-specified time
 (Theorems \ref{consensus_time_larger} and \ref{proposed_optimal_larger}).
 \item 
 The control rules in the proposed protocol are solution of certain max-min optimizations (Sections \ref{z_inside} and \ref{z_outside}).
       We obtain the closed form expressions of the corresponding optimal controls (Lemmas \ref{opt_inside} and \ref{opt_outside}).
\end{enumerate}
 The current paper is an extension of our work in  \cite{Sawant} in three major ways. First, it was assumed in \cite{Sawant} that
the initial local disagreements between agents are confined below a pre-specified bound.
In the current paper, no such assumption on initial conditions  has been made. Second, the protocol in \cite{Sawant} solves the minimization problem for a specific consensus time. On the other hand, the protocol developed in the current paper solves the minimization problem for a general pre-specified consensus time. Finally, in the current paper, the effect of pre-specified consensus time on optimal communication cost is analyzed. Such analysis was not presented in \cite{Sawant}.

The remaining part of this paper is organized as follows. In Section \ref{prob_formulation}, the problem of minimizing the number of communication instants is formulated. In Section \ref{proposed}, a distributed consensus protocol is proposed, which will be shown to be the
solution for the special case of the formulated problem, in Sections \ref{consensus_under_our} and \ref{Optimality_2T}. The protocol proposed in Section \ref{proposed} is extended for the general case of the formulated problem, in Section \ref{T_larger}. In Section \ref{simulation}, the simulation results are presented. The paper is concluded in Section \ref{conclusion1} with future directions.

\section{Preliminaries and Problem formulation}\label{prob_formulation}
\subsection{Graphs}
A graph $G=(V,E)$ is a finite set of nodes $V$ 
connected by a set of edges  $E \subseteq (V\times V)$. An edge between nodes $i$ and $j$ is represented by an ordered pair $(i,j) \in E$. A graph $G$ is said to be \textit{simple} if $(i,i)\not \in E,~\forall i \in V$. A graph $G$ is said to be \textit{undirected} if $(i,j) \in E$ implies $(j, i) \in E$.
In an undirected graph $G$, if $(i,j) \in E$ (and equivalently $(j,i) \in E$), then the nodes $i$ and $j$ are said to be \textit{neighbours} of each other.
A \textit{path} between nodes $i$ and $j$ in an undirected graph $G$ is a sequence of edges $(i,k_1), (k_1,k_2),\dots,(k_{r-1},k_r),(k_r,j) \in E$. An undirected graph $G$ is said to be \textit{connected} if there exists a path between any two nodes in $G$.
Let $n_i$ denote the number of neighbours of node $i$ and $|V|$ denote the cardinality of set $V$.
Then, the \textit{Laplacian} matrix $L \in {\mathbb{R}}^{|V| \times |V|}$ of a simple undirected graph $G=(V,E)$ is defined as
\begin{equation*}
 L_{i,j}: = \begin{cases}
                        n_i,& \text{ if } ~~~~i=j \\
                       -1,& \text{ if } ~~~~i \neq j ~~\text{ and }~~ (i,j) \in E\\
                        0,& \text{ if } ~~~~i \neq j ~~\text{ and }~~ (i,j) \not \in E
                      \end{cases}
\end{equation*}

\subsection{System description}\label{system}
Consider a multi-agent system (MAS) of $n$ single-integrator agents, labeled as $a_1$, $a_2, \dots, a_n$, with dynamics
\begin{equation}\label{MAS}
 \dot{x}_i(t)=u_i(t),~~~~~i = 1,\dots, n
\end{equation}
where $x_i(t) \in \mathbb{R}$ and $u_i(t) \in \mathbb{R}$ are the state and the control input of agent $a_i$, respectively.
Let $X:=[x_1,x_2,\dots,x_n]$ and ${\bf{u}}:=[u_1,u_2,\dots,u_n]$ be the augmented state and control vectors of MAS (\ref{MAS}), respectively.
Define the set $P:=\{1,2,\dots,n\}$.

Let $G$ be a \textit{time-invariant}  simple undirected graph, whose nodes
represent agents in MAS (\ref{MAS}) whereas the edges represent the communication 
links between agents, over which they exchange information with their neighbours.
Let $S_i$ be the set of indices of neighbours of agent $a_i$. Note that $i \not \in S_i$ as $G$ is a simple graph.
The cardinality of the set $S_i$ is denoted by $n_i$.
Let $L$ denote the \textit{Laplacian} matrix of $G$.

We make the following assumptions about MAS (\ref{MAS}):
\begin{enumerate}
 \item The control input $u_i$ of each agent belongs to the set
 \begin{equation*}
  \mathcal{U}:=\{ u \in \mathcal{M}~|~|u(t)| \leq \beta,~\forall t\geq 0\} 
 \end{equation*}
 where $\mathcal{M}$ denotes the set of measurable functions from $[0,\infty)$ to   $\mathbb{R}$.
 \item The communication graph $G$ is connected.
 \item \label{delay} The communication delay is zero.
\end{enumerate}

\subsection{Consensus}
In this paper, we will be using two notions of consensus, which we define next.
\begin{definition}
 MAS (\ref{MAS}) is said to have achieved \textit{conventional consensus} at instant $\widetilde{t}$ if
 \begin{equation*}
 \widetilde{t}:=\inf \big\{\widehat{t}~ \big|~ x_i(t)=x_j(t),~~\forall t\geq \widehat{t},~~\forall i,j\in P \big\} < \infty
 \end{equation*}
\end{definition}
Define $Z(t)=[z_1(t),\dots,z_n(t)]:=LX(t)$. Then, it follows from the definition of the Laplacian matrix that
\begin{equation}\label{local_dis}
 z_i(t)=\sum_{j\in S_i} \big(x_i(t)-x_j(t)\big),~~~~~\forall i \in P
\end{equation}
As $z_i$ is the sum of differences of agent $a_i$'s state with its neighbours, we call it the 
\textit{local disagreement} of agent $a_i$.
It is well known \cite{Jadbabaie} that for a MAS with connected, time-invariant communication graph,
conventional consensus at instant $\widetilde{t}$ is equivalent to $z_i(t)=0,~\forall t\geq \widetilde{t},~\forall i\in P$.
However, in many practical applications, it is not necessary that each $z_i$ becomes exactly zero.
It is sufficient if each $|z_i|$ remains below a prespecified consensus bound. This motivates our next notion of consensus, namely $\alpha$-\textit{consensus}. 
\begin{definition}\label{alpha}
 Let $\alpha \in {\mathbb{R}}^+$ be the prespecified consensus bound. MAS (\ref{MAS}) is said to have achieved
 $\alpha$-consensus at instant $\widetilde{t}$ if
 \begin{equation*}
  \widetilde{t}:=\inf \big\{\widehat{t}~ \big|~ |z_i(t)| \leq \alpha,~~\forall t\geq \widehat{t},~~\forall i \in P \big\}< \infty
 \end{equation*}
\end{definition}

\subsection{Communication model}\label{comm_model}
In this paper, we consider a discrete communication model.
Let $t^k_i$ denote the $k$th update instant (also referred to as the communication instant) of agent $a_i$, at which 
it accesses the state information of its neighbours and then, based on that information,
updates its control. 
Our communication model is asynchronous, i.e., the update
instants $t^k_i$'s of two different agents need not coincide.

As the communication model is discrete, the number of update instants
is a good measure of communication effort.
Hence, we define the \textit{communication cost} of agent $a_i$, denoted by
$C_i(t)$, as the number of update instants of agent $a_i$ in the time interval $[0,t]$.
Then, we define the \textit{aggregate} communication cost of MAS (\ref{MAS}), denoted by $C_{MAS}(t)$, as
\begin{equation}\label{agg_cost}
 C_{MAS}(t):=\sum_{i\in P} C_i(t)
\end{equation}



\subsection {Problem formulation}
Let $X(0)$ be an initial condition of MAS (\ref{MAS}), $\alpha$ be the prespecified consensus bound and 
$T$ be the prespecified consensus time.
Our objective is to develop a protocol which minimizes 
the communication cost
$C_{MAS}(T)$,
under the constraint of
achieving $\alpha$-consensus of MAS (\ref{MAS}) within time $T$. 

As discussed in Section \ref{Introduction}, due to  information structure imposed by graph $G$, 
an individual agent in MAS (\ref{MAS}) has access only to its own information and that of its neighbours. 
Therefore, the proposed protocol needs to be \textit{distributed}, i.e., based only on the local information.
In addition, an individual agent does not have global information
such as the number of agents $n$ in MAS (\ref{MAS}), the structure of the communication graph $G$,
the complete initial condition $X(0)$ etc.
To address these uncertainties, 
the proposed protocol must be able to
achieve $\alpha$-consensus of MAS (\ref{MAS}) within the prespecified time $T$,
for any $n$, any connected $G$ and any $X(0) \in {\mathbb{R}}^n$.

Since the input set $\mathcal{U}$ is magnitude bounded, for a fixed $T$, it will not be possible to achieve $\alpha$-consensus
within time $T$, for every
$X(0) \in {\mathbb{R}}^n$.
Thus, the consensus time must be specified as a function of $X(0)$.
To highlight this dependence on $X(0)$, we modify the notation 
of the prespecified consensus time from $T$ to $T\big(X(0)\big)$.
Similarly, the communication costs $C_i$ and $C_{MAS}$ depend on $X(0)$.
Thus, we modify their notations from $C_i(t)$ and $C_{MAS}(t)$ to
$C_i\big(t,X(0)\big)$ and $C_{MAS}\big(t,X(0)\big)$, respectively.
Now, we formalize our objective as follows:
\begin{problem}\label{prob_main}
 Consider MAS (\ref{MAS}) with initial condition $X(0) \in {\mathbb{R}}^n$ and  connected communication graph $G$.
Let $\Psi(n)$ denote the set of connected graphs with $n$ nodes.
Let $T\big(X(0)\big)$ be the prespecified consensus time 
 which is expressed as a function of $X(0)$.
Develop, if possible, a discrete asynchronous communication based protocol, i.e., admissible control ${\bf{u}}^{\ast}=[u_1^{\ast},\dots,u^{\ast}_n]$,
adhering to graph $G$,
 which is solution of the following optimization:
 \begin{align}
  {\bf{u}}^{\ast}=~&\min_{ \substack{u_i \in \mathcal{U} \\ \forall i \in P} }~~C_{MAS}\Big(T\big(X(0)\big),X(0) \Big) \nonumber \\
                 & \text{ s.t. }~~~|z_i(t)| \leq \alpha,~~\forall t\geq T\big( X(0) \big), \label{constraint_main} ~~ \forall i \in P\\ \nonumber 
                 & \hspace*{2.03cm} \forall n,~~\forall G \in \Psi(n),~~\forall X(0) \in {\mathbb{R}}^n
 \end{align}
\end{problem}

\subsection{Choosing $T\big(X(0)\big)$ }\label{solution_flow}
In Problem \ref{prob_main}, the time $T\big(X(0)\big)$ can be specified as any function of $X(0)$. However, in  practical applications, it is desirable to set $T\big(X(0)\big)$ to the minimum feasible value.

In \cite{Mulla_single_integrator}, the time-optimal control rule is proposed which achieves conventional consensus of MAS (\ref{MAS}) in minimum time.
This control rule and the corresponding consensus time, denoted by $u^{\star}_i$ and $T^{\ast}\big(X(0)\big)$, respectively, are presented below.
Let $X(0)=[x_1(0),\dots,x_n(0)]$ be an initial condition of MAS (\ref{MAS}). Define $x^{min}(0):=\min \big \{ x_1(0),\dots,x_n(0) \big\}$ and $x^{max}(0):=\max \big \{ x_1(0),\dots,x_n(0) \big\}$. Recall that $z_i$ denote the local disagreement of agent $a_i$. Let $sign$ denote the standard signum function. Then, the time-optimal consensus rule from \cite{Mulla_single_integrator} is 
\begin{equation}\label{opt_control}
 u^{\star}_i(t)=-\beta\hspace*{0.1cm} sign \big(z_i(t)\big),~~~~\forall t\geq 0,~~~~\forall i \in P 
 \end{equation}
 with the corresponding consensus time
 \begin{equation}\label{min_time_expression}
 T^{\ast}\big(X(0)\big)=\frac{x^{max}(0) - x^{min}(0)}{2\beta} 
\end{equation}

Notice that the control rule (\ref{opt_control}) requires instantaneous access to the local disagreement $z_i$, and as a result, demands continuous communication. Then, it follows from the time optimality of $T^{\ast}\big(X(0)\big)$ that a discrete communication based protocol cannot achieve $\alpha$-consensus of MAS (\ref{MAS}) within time $T^{\ast}\big(X(0)\big)$. This implies that Problem \ref{prob_main} is infeasible for $T\big(X(0)\big)\leq T^{\ast}\big(X(0)\big)$ and we should assume $T\big(X(0)\big)> T^{\ast}\big(X(0)\big)$.
We further assume that $T\big(X(0)\big) \geq 2T^{\ast}\big(X(0)\big)$. This assumption makes Problem \ref{prob_main} tractable and results in a particularly simple closed form solution for the control inputs.
  We solve Problem \ref{prob_main} for $T\big(X(0)\big)= 2T^{\ast}\big(X(0)\big)$ in Sections \ref{proposed}-\ref{Optimality_2T} and later extend it for $T\big(X(0)\big)> 2T^{\ast}\big(X(0)\big)$ in Section \ref{T_larger}.

\section{Protocol $\Romannum{1}$: For $ T\big(X(0)\big) = 2T^{\ast}\big(X(0)\big)$}\label{proposed}
In this section, we present the protocol proposed for $ T\big(X(0)\big) = 2T^{\ast}\big(X(0)\big)$.
We refer to this protocol as Protocol $\Romannum{1}$. Later, this protocol will be shown to be a solution of Problem \ref{prob_main} for
$ T\big(X(0)\big) = 2T^{\ast}\big(X(0)\big)$. 

Protocol $\Romannum{1}$ has the following two elements:\\
\textit{$1$)~ Computation of next update instant and control input }\\
$a$)~ At each update instant $t^k_i,~k\geq1$, agent $a_i,~ i\in P$, accesses
the current states $x_j(t^k_i)$'s of its neighbours $a_j,~ j \in S_i$, 
and computes $z_i(t^k_i) = \sum_{j \in S_i} \big(x_i(t^k_i)-x_j(t^k_i)\big)$.\\
$b$)~After that, agent $a_i$ computes its next update instant $t^{k+1}_i$  
and control input $u^{\ast}_i$ to be applied in the interval $[t^k_i,t^{k+1}_i)$ as follows:\\
\hspace*{0.5cm}$\romannum{1}$) If $|z_i(t^k_i)| \leq \alpha$, then
\begin{align}
 t^{k+1}_i &= t^k_i + \frac{\alpha}{\beta n_i} \label{update_instant}\\
 u^{\ast}_i(t) &= -\frac{z_i(t^k_i)}{\alpha}\beta,\hspace*{0.9cm} \forall t\in \big[t^k_i,t^{k+1}_i\big) \label{control}
\end{align}
\hspace*{0.5cm}$\romannum{2}$) If $|z_i(t^k_i)| > \alpha$, then
\begin{align}
 t^{k+1}_i &= t^k_i + \frac{|z_i(t^k_i)|+\alpha}{2\beta n_i}\label{update_instant_out} \\
 u^{\ast}_i(t) &= -\beta sign\big (z_i(t^k_i) \big),~~~~\forall t\in \big[t^k_i,t^{k+1}_i\big) \label{control_out}
\end{align}
$c$)~Then, agent $a_i$ broadcasts $x_i(t^k_i)$ and $u^{\ast}_i(t^k_i)$.
This broadcast information is received by agent $a_i$'s neighbours $a_j,~j\in S_i$, at the same instant 
$t^k_i$.
The neighbours $a_j,~j\in S_i$, store this information with the reception time-stamp.

\textit{$2$) Accessing the states of neighbours at update instants }\\
$a$)~The time instant $t^1_i=0$ is the first update instant of all agents $a_i,~i \in P$. 
At this instant, all agents broadcast their current states.
Thus, each agent $a_i,~i\in P$, has a direct access to the current states $x_j(t^1_i)$'s of its neighbours $a_j,~j \in S_i$.
For example, consider the communication graph $G_c$ shown in Fig. \ref{graph_in_model}
and the corresponding communication timeline shown in Fig. \ref{comm_timeline}.
At instant $t=0$, agents $a_1$, $a_2$ and $a_3$ broadcast information to their neighbours.
\begin{figure}[h]
\begin{center}
\begin{tikzpicture}[scale=0.3]
 \draw (-1,0) circle [radius=1.2];
 \draw (-10,0) circle [radius=1.2];
 \draw (8,0) circle [radius=1.2];
 \draw (-2.2,0) to (-8.8,0);
 \draw (0.2,0) to (6.8,0);
 \node at (-10,0) {$1$};
 \node at (-1,0) {$2$};
 \node at (8,0) {$3$};
\end{tikzpicture}
\end{center}
\caption{Communication graph $G_c$ }\label{graph_in_model}
\end{figure}
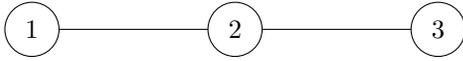
\begin{figure}[h]
\begin{center}
\begin{tikzpicture}[scale=0.3]
\draw [->] (-16,0) to (6,0);
\draw [->] (-16,-4) to (6,-4);
\draw [->] (-16,-8) to (6,-8);
\draw[thick,<->] (-16,-0.1) to (-16,-3.9);
\draw[thick,<->] (-16,-4.1) to (-16,-7.9);
\draw[thick,->] (-12,0) to (-12,-4);
\node at (-12,0) [circle,fill,inner sep=1.5pt]{};
\draw[dashed] (-12,-4) to (-12,-8);
\draw[thick,->] (-7,-8) to (-7,-4);
\node at (-7,-8) [circle,fill,inner sep=1.5pt]{};
\draw[thick,->] ( -3,-3.9) to (-3,0);
\draw[thick,->] ( -3,-4.1) to (-3,-8);
\node at (-3,-4) [circle,fill,inner sep=1.5pt]{};
\draw[thick,->] (3.1,0) to (3.1,-4);
\node at (3.1,0) [circle,fill,inner sep=1.5pt]{};
\draw[dashed] (3.1,-4) to (3.1,-8);
\draw[] (-16,0.2) to (-16,-0.2);
\draw[] (-16,-3.8) to (-16,-4.2);
\draw[] (-16,-7.8) to (-16,-8.2);
\node at (-15.9,-9.2) {$0$};
\node at (-11.9,-9.2) {$t^k_1$};
\node at (-6.9,-9.2) {$t^l_3$};
\node at (-2.9,-9.2) {$t^p_2$};
\node at (3.4,-9.2) {$t^{k+1}_1$};
\node at (-18,-0.1) {$a_1$};
\node at (-18,-4.1) {$a_2$};
\node at (-18,-8.1) {$a_3$};
\end{tikzpicture}
\end{center}
\caption{Timeline of communication over graph $G_c$ in Fig. \ref{graph_in_model}. Dark arrows indicate information transmissions.
}\label{comm_timeline}
\end{figure}
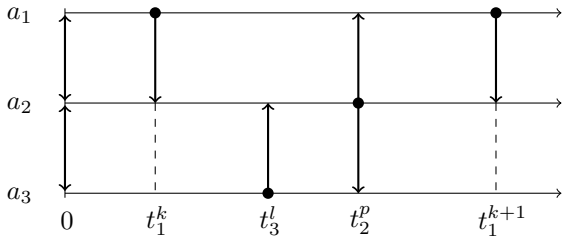

$b$)~Let $t^k_i,~k>1$, be any update instant of agent $a_i$. Let $t^l_j<t^k_i$ be the latest update instant of agent $a_j,~j\in S_i$,
at which it had broadcast $x_j(t^l_j)$ and $u^{\ast}_j(t^l_j)$.
At update instant $t^k_i$, agent $a_i$ accesses the stored information and retrieves $x_j(t^l_j)$ and $u^{\ast}_j(t^l_j)$ for all $j \in S_i$.
For example, in the communication timeline shown in Fig. \ref{comm_timeline},
at instants $t^k_1$ and $t^l_3$, agent $a_2$ receives information from its neighbours. 
Later, agent $a_2$ uses this information at its update instant $t^p_2$.\\
$c$)~It is known from (\ref{control}) and (\ref{control_out}) that every agent $a_j,~j \in S_i$, had applied control $u^{\ast}_j(t)=u^{\ast}_j(t^l_j)$ in the interval $ t \in [t^l_j,t^k_i)$.
Using this fact, agent $a_i$ computes the current state $x_j(t^k_i)$ of $a_j,~\forall j\in S_i$, as
\begin{equation*}\label{neighbour_state}
 x_j\big(t^k_i\big)=x_j(t^l_j) + \int_{t^l_j}^{t^k_i}\!u^{\ast}_j\big(t^l_j\big)~ dt
\end{equation*}

\begin{remark}\label{computation_fast}
 As per Assumption \ref{delay}, the communication delay is zero.
 In addition, the time required for the computation of $t^{k+1}_i$ and $u^{\ast}_i(t^k_i)$
 is negligible.
 This justifies the assumption that computation, transmission and reception of $u^{\ast}_i(t^k_i)$ and 
 $x_i(t^k_i)$, happen at the same time
 instant $t^k_i$.
\end{remark}

\begin{remark}\label{control_broadcast}
 It may appear from (\ref{control}) and (\ref{control_out}) that we have selected $u^{\ast}_i$'s which are constant over
 intervals $[t^k_i,t^{k+1}_i)$, in order to simplify analysis. 
 However, $u^{\ast}_i$'s in (\ref{control}) and (\ref{control_out}) 
 are obtained as the solution of certain max-min optimizations (Sections \ref{z_inside} and \ref{z_outside})
 and coincidently, they have this nice form. 
\end{remark}

\section{$\alpha$-consensus under Protocol $\Romannum{1}$}\label{consensus_under_our}
In this section, we show that Protocol $\Romannum{1}$
achieves $\alpha$-consensus of MAS (\ref{MAS}) within time  $2T^{\ast}\big(X(0)\big)$.
However, before that, we present a necessary condition on feasible solutions of Problem \ref{prob_main}, which will be utilized while proving the attainment of $\alpha$-consensus within time $2T^{\ast}\big(X(0)\big)$.
\subsection{Necessary condition on feasible solutions of Problem \ref{prob_main}}
A discrete communication based distributed protocol is said to be a \textit{feasible} solution of Problem \ref{prob_main}
if it satisfies constraint (\ref{constraint_main}), 
i.e., achieves $\alpha$-consensus of MAS (\ref{MAS}) within time $T\big(X(0)\big)$, for
any number of agents $n$, any connected graph $G$ and any initial condition 
$X(0)\in {\mathbb{R}}^n$. Recall that $\Psi(n)$ denotes the set of connected graphs with $n$ nodes. Then, the following lemma gives a necessary condition on the feasible solutions of Problem \ref{prob_main}.

\begin{lemma}\label{necessary_condition}
Consider MAS (\ref{MAS}).
 Let $T\big(X(0)\big)$ be the pre-specified consensus time.
 Let $Q$ be any discrete communication based distributed protocol.
 Under Protocol $Q$, let $\widetilde{t}_i$ be the first time instant at which the local disagreement $z_i$ of agent $a_i$
 satisfies $\big|z_i\big(~\!\widetilde{t}_i\big)\big| \leq \alpha$. Then,
 Protocol $Q$ is a feasible solution of Problem \ref{prob_main} only if it ensures
 \begin{equation}\label{retain_inside}
  \big|z_i(t)\big| \leq \alpha,\hspace*{0.4cm}\forall t\geq \widetilde{t}_i,\hspace*{0.2cm} \forall n, \hspace*{0.2cm} \forall G \in \Psi(n), \hspace*{0.2cm} \forall X(0) \in {\mathbb{R}}^n
 \end{equation}
\end{lemma}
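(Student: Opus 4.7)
The plan is a proof by contradiction. Suppose that protocol $Q$ is feasible for Problem~\ref{prob_main} but that (\ref{retain_inside}) fails. Then there exist a configuration $(n_0, G_0, X_0(0))$, an agent index $i$, and a time $t^* > \widetilde{t}_i$ with $|z_i(t^*)| > \alpha$. Because $Q$ is assumed feasible on this very configuration, the consensus constraint forces $|z_i(t)| \leq \alpha$ for every $t \geq T\big(X_0(0)\big)$, so necessarily $t^* < T\big(X_0(0)\big)$; i.e.\ any violation of (\ref{retain_inside}) must occur in the interval $\big(\widetilde{t}_i, T(X_0(0))\big)$.

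The next step is to exhibit a second configuration $(n_1, G_1, X_1(0))$ on which $Q$ actually violates the consensus constraint, thereby contradicting feasibility. The natural construction keeps the same graph ($n_1 = n_0$, $G_1 = G_0$) and picks $X_1(0)$ equal to the state $X_0(\widetilde{t}_i)$ of the original system. Because $Q$ is distributed and time-invariant --- its rules depend only on observed neighbour data and on elapsed local clocks --- the evolution under $Q$ starting from $X_1(0)$ reproduces the tail of the original evolution, so $|z_i^{(1)}(t^*-\widetilde{t}_i)| > \alpha$ in the new instance, while $|z_i^{(1)}(0)| = |z_i(\widetilde{t}_i)| \leq \alpha$. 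Hence the violation in the new instance occurs strictly after the initial time.

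To close the argument, one selects $X_0(0)$ (and thereby $X_1(0) = X_0(\widetilde{t}_i)$) so that $T\big(X_1(0)\big) \leq t^* - \widetilde{t}_i$. This is possible because the minimum admissible specification $T(X(0)) \geq 2T^{\ast}(X(0)) = \big(x^{max}(0) - x^{min}(0)\big)/\beta$ can be made as small as desired by shrinking the spread of $X_1(0)$, and the quantifier ``$\forall X(0) \in \mathbb{R}^n$'' in Problem~\ref{prob_main} guarantees that any such initial condition is admissible. Under this choice, the new instance exhibits $|z_i(t)| > \alpha$ at some $t \geq T\big(X_1(0)\big)$, violating (\ref{constraint_main}) and contradicting the feasibility of $Q$.

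The main obstacle is the update-instant alignment: when one transports the tail of the original trajectory to a fresh initial condition at $t=0$, every agent in the new instance starts with a synchronous first update at $0$, whereas in the original trajectory each neighbour of $a_i$ was generally mid-interval at $\widetilde{t}_i$. Handling this cleanly requires exploiting the memoryless form of the protocol's update rules --- or, more robustly, prepending a short preparatory segment of length $T\big(X_1(0)\big)$ whose role is to drive the agents' local clocks into exactly the state they had at $\widetilde{t}_i$ in the original system, so that the subsequent violation is faithfully reproduced in the post-$T(X_1(0))$ window.
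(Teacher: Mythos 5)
Your restart construction has two genuine gaps. First, the tail-transport step --- running $Q$ afresh from $X_1(0)=X_0\big(\widetilde{t}_i\big)$ and claiming it reproduces the post-$\widetilde{t}_i$ portion of the original trajectory --- is not available for an arbitrary feasible protocol. Problem \ref{prob_main} only requires $Q$ to be a discrete-communication distributed protocol; it need not be memoryless or time-invariant (it may count updates, remember past samples, or schedule its communication instants as an arbitrary function of its history), and, as you yourself note, at $t=\widetilde{t}_i$ the neighbours of $a_i$ are generically mid-interval, so a synchronous restart at $t=0$ puts every agent in a different internal/clock state than it had at $\widetilde{t}_i$. The proposed fixes (assuming memorylessness, or prepending a ``preparatory segment'' that steers all local clocks and internal states to their values at $\widetilde{t}_i$) are asserted rather than carried out: no initial condition with that property is shown to exist, and producing one is essentially as hard as the lemma itself.

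Second, the closing step is circular. The negation of (\ref{retain_inside}) hands you \emph{one} configuration $(n_0,G_0,X_0(0))$ in which the violation occurs; you do not then get to ``select $X_0(0)$.'' The quantities $\widetilde{t}_i$, $t^{*}$ and $X_1(0)=X_0\big(\widetilde{t}_i\big)$ are all outputs of $Q$'s evolution from that given $X_0(0)$, so neither the spread of $X_1(0)$ (hence $T\big(X_1(0)\big)$) nor the gap $t^{*}-\widetilde{t}_i$ is under your control; in particular $t^{*}-\widetilde{t}_i$ may be arbitrarily small while $T\big(X_1(0)\big)\geq T^{\ast}\big(X_1(0)\big)>0$, in which case no contradiction is reached. (You also implicitly assume $T(\cdot)$ shrinks with the spread of $X(0)$, whereas the lemma is stated for an arbitrary pre-specified $T\big(X(0)\big)$.) The paper instead argues directly from the information structure: it exhibits instances (Example \ref{strict_bound}, Theorem \ref{strict_theo_1}) in which $\widetilde{t}_i$ is arbitrarily close to $T\big(X(0)\big)$, so that agent $a_i$ --- which cannot locally tell how much slack remains --- must keep $|z_i|\leq\alpha$ from $\widetilde{t}_i$ onward, since there any post-$\widetilde{t}_i$ excursion immediately violates (\ref{constraint_main}).
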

\begin{proof}
 Recall that due to  communication structure imposed by graph $G$, 
 an individual agent $a_i$ in MAS (\ref{MAS}) does not know the complete initial condition $X(0)$. Therefore, it does not know 
 the exact value of $T\big(X(0)\big)$
 and how far $\widetilde{t}_i$ is from
 $T\big(X(0)\big)$. 
 In fact, as in Example \ref{strict_bound} presented below, there may exist a MAS of the form (\ref{MAS})
 in which $\widetilde{t}_i=T\big(X(0)\big)$. 
 In such a case, violation of
 (\ref{retain_inside}) results in
 the violation of constraint (\ref{constraint_main}) in Problem \ref{prob_main}. This contradicts the fact that Protocol $Q$ is a feasible solution
 of Problem \ref{prob_main}. Hence, Protocol $Q$ must satisfy
 (\ref{retain_inside}).
 This completes the proof.
\end{proof}



The following lemma shows that Protocol $\Romannum{1}$ satisfies the necessary condition (\ref{retain_inside}).
The proof of this lemma relies on the derivation of control rules (\ref{update_instant})-(\ref{control_out}) in Protocol $\Romannum{1}$, which is deferred to Section \ref{Optimality_2T} for better structure of the paper.  Hence, we defer the proof of the said lemma to Section \ref{feasible1}.
\begin{lemma}\label{feasible}
Protocol $\Romannum{1}$ satisfies the necessary condition (\ref{retain_inside}). 
\end{lemma}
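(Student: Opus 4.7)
The plan is to establish (\ref{retain_inside}) by a direct analysis of $z_i(t)$ on each update interval of agent $a_i$ following $\widetilde{t}_i$, and then to chain the estimates by induction on the index of the update instant. The starting point is the identity $\dot{z}_i(t) = n_i u^{\ast}_i(t) - \sum_{j \in S_i} u_j(t)$, obtained by differentiating (\ref{local_dis}), together with the worst-case disturbance bound $\big|\sum_{j \in S_i} u_j(t)\big| \leq n_i \beta$ that follows from $u_j \in \mathcal{U}$, regardless of how the neighbours actually choose their inputs.

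First I would handle an update interval $[t^k_i, t^{k+1}_i)$ on which the inside rule (\ref{update_instant})--(\ref{control}) applies, that is, $|z_i(t^k_i)| \leq \alpha$. Integrating $\dot{z}_i$ with the constant control (\ref{control}) and the window length $\alpha/(\beta n_i)$ yields
\begin{equation*}
z_i(t) = z_i(t^k_i)\left(1 - \frac{n_i\beta(t-t^k_i)}{\alpha}\right) - \int_{t^k_i}^{t} \sum_{j \in S_i} u_j(s)\, ds .
\end{equation*}
On this window the parenthetical factor lies in $[0,1]$ and the disturbance integral is bounded in magnitude by $n_i\beta(t - t^k_i)$. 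A one-line triangle inequality then collapses these two contributions against each other and delivers $|z_i(t)| \leq \alpha$ uniformly on $[t^k_i, t^{k+1}_i)$; in particular $|z_i(t^{k+1}_i)| \leq \alpha$, so the inside rule remains active at the next update.

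Next I would handle an update interval on which the outside rule (\ref{update_instant_out})--(\ref{control_out}) is active, say with $z_i(t^k_i) > \alpha$ (the case $z_i(t^k_i) < -\alpha$ being symmetric). Here $u^{\ast}_i = -\beta$ forces $\dot{z}_i \in [-2n_i\beta,\, 0]$, so $z_i$ is non-increasing across the whole interval, and a worst-case integration shows that the specific choice of length $(|z_i(t^k_i)|+\alpha)/(2\beta n_i)$ in (\ref{update_instant_out}) is exactly tight enough to guarantee $z_i(t) > -\alpha$ for every $t \in [t^k_i, t^{k+1}_i)$. Consequently, if $\widetilde{t}_i \in [t^k_i, t^{k+1}_i)$, then monotonicity gives $z_i(t) \leq z_i(\widetilde{t}_i) = \alpha$ while the disturbance bound gives $z_i(t) > -\alpha$ for all $t \in [\widetilde{t}_i, t^{k+1}_i)$. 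Thus $|z_i(t)| \leq \alpha$ holds through and at the update instant $t^{k+1}_i$, at which point the inside rule takes over and the first step propagates the bound across $[t^{k+1}_i, t^{k+2}_i)$, and so on by induction.

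The part I expect to be the main obstacle is precisely the ``transition'' sub-interval $[\widetilde{t}_i, t^{k+1}_i)$: the agent is still committed to the outside control and has no opportunity to re-plan, so one must argue that the outside-rule window length is calibrated tightly enough to preclude an overshoot past $-\alpha$. This calibration is not a free design choice; it is the output of the max-min optimization in Section \ref{z_outside} that produces (\ref{update_instant_out})--(\ref{control_out}) in the first place. For that reason the formal write-up is most naturally placed in Section \ref{feasible1}, after the optimizations defining the control rules have been carried out.
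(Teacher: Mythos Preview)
The proposal is correct and follows essentially the same route as the paper: the paper's proof in Section~\ref{feasible1} simply cites the constraint-feasibility portions of Lemmas~\ref{opt_inside} and~\ref{opt_outside} (whose Appendix proofs carry out exactly the worst-case integration and monotonicity arguments you sketch), while you unpack those same computations inline. Your identification of the transition sub-interval $[\widetilde{t}_i, t^{k+1}_i)$ as the delicate point, and the observation that the window length in (\ref{update_instant_out}) is calibrated precisely to prevent overshoot past $\mp\alpha$, matches the paper's logic.
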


\subsection{$\alpha$-consensus within time $2T^{\ast}\big(X(0)\big)$}
The following theorem shows that Protocol $\Romannum{1}$
achieves $\alpha$-consensus of MAS (\ref{MAS}) within time $2T^{\ast}\big(X(0)\big)$.
\begin{theo}\label{consensus_time}
 Consider MAS (\ref{MAS}).
 Let $\alpha \in {\mathbb{R}}^{+}$
 be the pre-specified consensus bound.
 Let 
 $T^{\ast}\big(X(0)\big)$ be as defined in (\ref{min_time_expression}).
 Then, for every $n$, every connected communication graph $G$ and every $X(0) \in {\mathbb{R}}^n$, Protocol $\Romannum{1}$
 achieves $\alpha$-consensus of MAS (\ref{MAS}) in time less than or equal to $2T^{\ast}\big(X(0)\big)$. 
\end{theo}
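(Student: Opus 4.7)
The natural first move is to invoke Lemma \ref{feasible}: the theorem reduces to showing that, for every agent $a_i$, the first hitting time $\widetilde{t}_i := \inf\{t \geq 0 : |z_i(t)| \leq \alpha\}$ satisfies $\widetilde{t}_i \leq 2T^{\ast}\big(X(0)\big)$, since once $|z_i|$ enters the band $[-\alpha,\alpha]$ it cannot leave. The key preparatory observation is a sign-consistency property: on $[0,\widetilde{t}_i)$ the continuous function $z_i$ satisfies $|z_i(t)| > \alpha > 0$, so $sign\big(z_i(t)\big)$ is constant throughout that interval. Consequently the piecewise-constant control $-\beta\, sign\big(z_i(t^k_i)\big)$ prescribed by (\ref{control_out}) coincides pointwise with the time-optimal feedback $-\beta\, sign\big(z_i(t)\big)$ of (\ref{opt_control}); while agent $a_i$ is outside the $\alpha$-band, it is effectively running the continuous time-optimal rule of \cite{Mulla_single_integrator}.

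Next I would carry out a per-interval analysis of a mode-2 update. Integrating $\dot z_i = n_i u^{\ast}_i - \sum_{j\in S_i} u^{\ast}_j$ over the window $[t^k_i, t^{k+1}_i)$ of length $(|z_i(t^k_i)| + \alpha)/(2\beta n_i)$ and using the uniform bound $|u^{\ast}_j| \leq \beta$ on the neighbours gives the bracket $-\alpha \leq z_i(t^{k+1}_i) \leq z_i(t^k_i)$ when $z_i(t^k_i) > 0$, together with its sign-flipped counterpart. Hence $|z_i(t^k_i)|$ is non-increasing across mode-2 updates and can never overshoot to the far side of the band, so the very first update that drops $z_i$ into $[-\alpha,\alpha]$ terminates mode 2 permanently.

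The crux, and the main obstacle, is to turn these local invariants into the sharp bound $\max_i \widetilde{t}_i \leq 2T^{\ast}\big(X(0)\big)$. I would couple the per-agent dynamics to the global spread $V(t) := x^{max}(t) - x^{min}(t)$, noting that $V(0) = 2\beta T^{\ast}\big(X(0)\big)$ by definition of $T^{\ast}$. Under the continuous time-optimal rule, $V$ decays at rate $2\beta$ and reaches zero by $T^{\ast}\big(X(0)\big)$; since mode-2 agents execute that rule pointwise, the doubling to $2T^{\ast}\big(X(0)\big)$ has to absorb the adversarial effect of mode-1 agents, whose controls $-(z_i(t^k_i)/\alpha)\beta$ are bounded by $\beta$ in magnitude but may carry a sign inherited from a stale measurement. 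I plan to manage this by a case analysis on which agent sits at $x^{max}$ (respectively $x^{min}$) and in which mode it currently resides, arguing that the extremal pair closes at an average rate of at least $\beta$ rather than the ideal $2\beta$, which is exactly what yields the factor-of-two inflation. Making this worst-case accounting tight enough to match the constant $2$ is where the effort will concentrate.
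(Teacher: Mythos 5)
Your reduction is sound and matches the paper's opening move: by Lemma \ref{feasible} it suffices to bound the first hitting time of the band, and while $|z_i|>\alpha$ the sign of $z_i(t^k_i)$ is frozen, so the agent applies the constant control $\pm\beta$ throughout $[0,\widetilde{t}_i)$. Your per-interval bracket $-\alpha \leq z_i(t^{k+1}_i)\leq z_i(t^k_i)$ for mode-2 updates is also correct (it is essentially the computation behind Lemma \ref{opt_outside}). The gap is in the final step, which you yourself flag as the crux. Tracking the global spread $V(t)=x^{max}(t)-x^{min}(t)$ and arguing that ``the extremal pair closes at an average rate of at least $\beta$'' does not go through: the agent currently sitting at $x^{max}(t)$ (or $x^{min}(t)$) may already satisfy $|z_i|\leq\alpha$ and be applying a mode-1 control of arbitrarily small magnitude, so nothing forces the extremes to approach each other at any positive rate, let alone $\beta$. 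Indeed $\alpha$-consensus can hold with $V$ bounded away from zero, so no decay estimate on $V$ of the required strength is available.

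The missing idea is an \emph{invariance} argument rather than a decay argument. The paper first proves (Lemmas \ref{inside_one_interval} and \ref{inside_initial_val}) that under Protocol \Romannum{1} every state satisfies $x^{min}(0)\leq x_i(t)\leq x^{max}(0)$ for all $t\geq 0$ --- each agent stays inside the convex hull of its neighbourhood over each update interval, hence inside the initial global hull forever. Then the contradiction is purely local to the stuck agent: if $z_i(t)<-\alpha$ on all of $[0,\hat{t}_i]$ with $\hat{t}_i>2T^{\ast}\big(X(0)\big)$, that agent has applied $u_i\equiv\beta$ the entire time, so $x_i(\hat{t}_i)=x_i(0)+\beta\hat{t}_i > x_i(0)+\big(x^{max}(0)-x^{min}(0)\big)\geq x^{max}(0)$, violating the invariance. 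No accounting of which agent is extremal, or of how fast the hull shrinks, is needed; the stuck agent's own displacement already exceeds the width of the invariant interval. To complete your proof you would need to establish the hull-invariance lemma (which itself requires the averaging property of the mode-1 update, Lemma \ref{average_inside}) and then replace the spread-decay argument with this single-agent displacement bound.
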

\begin{proof}
See the Appendix for the proof.
\end{proof}

According to Theorem \ref{consensus_time}, under Protocol $\Romannum{1}$,
the duration $2T^{\ast}\big(X(0)\big)$ is an upper bound on the time required 
to achieve $\alpha$-consensus of MAS (\ref{MAS}). Next, we show with the following example that there exists a MAS of the form (\ref{MAS}) for which the $\alpha$-consensus time under Protocol $\Romannum{1}$ is arbitrarily close to $2T^{\ast}\big(X(0)\big)$.

\begin{example}\label{strict_bound}
Consider the connected graph $G_e$ shown in Fig. \ref{graph_in_example}, which has $n$ nodes.
The subgraph $G_s$ of $G_e$ (inside the dashed square in Fig. \ref{graph_in_example}) is a complete graph on $n-1$ nodes.
The set of indices of neighbours of node $1$ is $S_1=\{2,3,\dots,r+1\}$.
Hence, the cardinality of $S_1$ is $r$.

Now, consider MAS (\ref{MAS}) with communication graph $G_e$.
Let the consensus bound and the control bound be $\alpha=3$ and $\beta=1$, respectively.
Let the initial conditions of agents be $x_1(0)=0$ and $x_i(0)=5,~\forall i=2,\dots,n$.
Thus, $x^{min}(0)=\min_{i} x_i(0)=0$ and $x^{max}(0)=\max_{i} x_i(0)=5$.
Then, it follows from the definition of $T^{\ast}\big(X(0)\big)$ in (\ref{min_time_expression}) that 
$T^{\ast}\big(X(0)\big)=2.5$ sec.
\end{example}

\begin{figure}[ht]
\begin{center}
\begin{tikzpicture}[scale=0.4]
\draw (-1,0) circle [radius=0.9];
\draw (1.7,5.49) circle [radius=0.9];
\draw (7.4,6.9) circle [radius=0.9];
\draw (13,0) circle [radius=0.9];
\draw (1.7,-5.49) circle [radius=0.9];
\draw (7.4,-6.9) circle [radius=0.9];

\draw (-1,0.9) to [out=80,in=225] (1,5);
\draw[thick,dotted] (2.4,6) to [out=29,in=177] (6.5,7);

\draw (-1,-0.9) to [out=280,in=135] (1,-5);
\draw [dotted, thick] (2.4,-6) to [out=331,in=179] (6.5,-7);

\draw [dotted,thick] (8.3,6.6) to [out=340,in=100] (13,0.9);
\draw [dotted,thick] (8.3,-6.6) to [out=20,in=260] (13,-0.9);

\draw (-7,0) circle [radius=0.9];
\draw (-6.1,0) to (-1.9,0);
\draw (-6.1,0) to (0.97,5);
\draw (-6.1,0) to (0.97,-5);
\draw (-6.1,0) to [out=71,in=150] (6.5,7);
\draw (-6.1,0) to [out=289,in=210] (6.5,-7);

\draw (-0.1,0) to (7.3,6);
\draw (-0.1,0) to (12.1,0);


\draw (7.3,6) to (12.1,0);
\draw (7.3,-6) to (12.1,0);

\draw (12.1,0) to (2.1,4.64);
\draw (12.1,0) to (2.1,-4.64);

\draw (2.1,4.64) to (7.3,-6);
\draw (2.1,-4.64) to (7.3,6);

\draw (2.1,4.64) to (2.1,-4.64);
\draw (7.3,6) to (7.3,-6);

\draw [dashed, thick] (-2.5,8.5)--(14.6,8.5)--(14.6,-8.5)--(-2.5,-8.5)--(-2.5,8.5);

 \node at (-1,0) {$2$};
 \node at (1.7,5.49) {$3$};
 \node at (7.4,6.9) {$r$};
 \node at (13,0) {$n$};
 \node at (6.82,-6.9) {${\footnotesize{r}}$};
 \node at (7.37,-6.9) {${\footnotesize{+}}$};
 \node at (8,-6.9) {$1$};
 \node at (1.7,-5.49) {$4$};
 \node at (-7,0) {$1$};
 \node at (12.1,6) {$G_s$};
 
\end{tikzpicture}
\end{center}
\caption{Communication graph $G_e$ of the MAS in Example \ref{strict_bound}}\label{graph_in_example}
\end{figure}
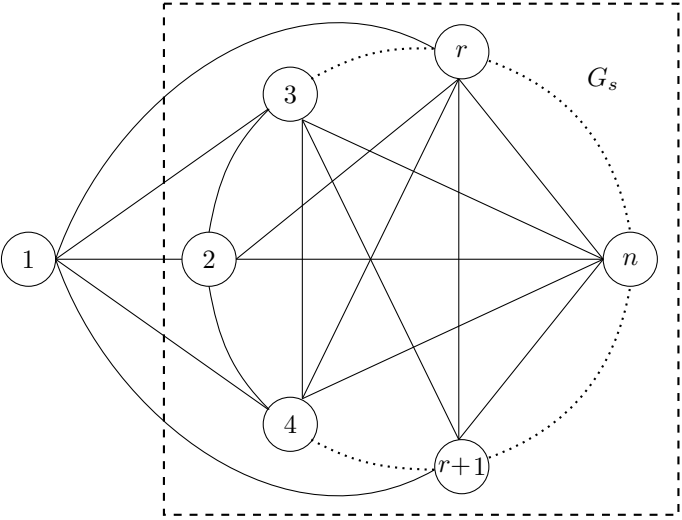

The following theorem shows that under Protocol $\Romannum{1}$, the time required to achieve $\alpha$-consensus of the MAS in Example \ref{strict_bound} is arbitrarily close to  $2T^{\ast}\big(X(0)\big)$.
\begin{theo}\label{strict_theo_1}
Consider the MAS in Example \ref{strict_bound}. Let $T\big(\alpha,X(0)\big)$ denote the time required by Protocol $\Romannum{1}$ to achieve $\alpha$-consensus of 
a MAS with initial condition $X(0)$.
Let $\epsilon>0$ be any real number. 
Then, there exist integers $r=r_{\epsilon}$ and $n=n_{\epsilon}$ such that the following holds.
\begin{equation}\label{bound_T_alpha}
   \Big(2T^{\ast}\big(X(0)\big) - \epsilon \Big) \leq  T\big(\alpha,X(0)\big)  \leq 2T^{\ast}\big(X(0)\big)
\end{equation}
\end{theo}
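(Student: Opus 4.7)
The upper bound $T(\alpha,X(0)) \leq 2T^*(X(0))$ is immediate from Theorem~\ref{consensus_time} applied to the MAS of Example~\ref{strict_bound}, so I focus on the lower bound. The heuristic is that $z_1 = r(x_1-y)$, where by symmetry $y$ denotes the common state of the $r$ neighbours of node~$1$. Achieving $|z_1|\leq\alpha$ forces $|x_1-y|\leq\alpha/r$, which for large $r$ is a very tight tolerance; meanwhile $y$ will be shown to remain close to its initial value $5$, so node~$1$ is forced to traverse almost the full distance $5$ at input magnitude at most $\beta=1$, consuming time close to $5=2T^*(X(0))$.

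The plan has three steps. (i)~\emph{Input bound.} From $|u_1|\leq \beta=1$ and $x_1(0)=0$, $x_1(t)\leq t$ for every $t\geq 0$. (ii)~\emph{Anchor estimate.} I will establish $y(t)\geq 5-\eta(r,n)$ on $[0,2T^*(X(0))]$, with $\eta(r,n)\to 0$ as $n/r\to\infty$. The mechanism is that the $n-1-r$ \emph{anchor} nodes in $V\setminus(\{1\}\cup S_1)$ are not adjacent to node~$1$, yet every $j\in S_1$ has $n-1-r$ anchor neighbours. Hence the inside-mode constraint $|z_j|\leq \alpha$ reads (after using symmetry within $S_1$) $|(x_j-x_1)+(n-1-r)(x_j-w)|\leq\alpha$, which pins $|y-w|=|x_j-w|\leq (\alpha+|x_j-x_1|)/(n-1-r)=O(1/(n-1-r))$, where $w$ is the common anchor state. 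Reciprocally, each anchor's disagreement is $|z_k|=r|w-y|=O(r/(n-1-r))$, so its control $|u_k|=|z_k|\beta/\alpha$ is of the same order; integrating over the bounded interval $[0,2T^*(X(0))]$ gives $|w(t)-5|=O(r/(n-1-r))$, and combining with the previous estimate yields $\eta(r,n)=O(r/(n-1-r))$. (iii)~\emph{Assembly.} As long as $y(t)>x_1(t)$, one has $|z_1(t)|=r(y(t)-x_1(t))\geq r(5-\eta(r,n)-t)$, which exceeds $\alpha$ whenever $t<5-\eta(r,n)-\alpha/r$. Given $\epsilon>0$, first pick $r_\epsilon$ so that $\alpha/r_\epsilon<\epsilon/3$ and then $n_\epsilon$ so that $\eta(r_\epsilon,n_\epsilon)<\epsilon/3$; for such a choice the inequality $|z_1(t)|>\alpha$ holds at $t=2T^*(X(0))-\epsilon$, so by the definition of $\alpha$-consensus $T(\alpha,X(0))> 2T^*(X(0))-\epsilon$, which in particular gives the required $T(\alpha,X(0))\geq 2T^*(X(0))-\epsilon$.

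The main obstacle is making Step~(ii) rigorous. The inside-mode bound on $|x_j-w|$ applies only after Protocol~\Romannum{1} has driven the $S_1$ nodes into the inside regime from their initial $|z_j(0)|=5>\alpha$. One must verify that this outside transient lasts only $O(1/n)$ time and shifts $y$ and $w$ by at most $O(1/n)$, so that the inside-regime estimates above apply for all $t\geq O(1/n)$ and absorb this initial loss. A secondary point is that Protocol~\Romannum{1} holds each $u_j$ constant between updates, but since $n_j=n-1$ for $j\in S_1$ and $n_j=n-2$ for the anchors, the update period is $O(1/n)$, so the continuous-limit reasoning used in Step~(ii) is realised up to errors vanishing as $n\to\infty$.
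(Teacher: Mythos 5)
Your strategy is correct and shares the paper's overall skeleton -- the upper bound is indeed immediate from Theorem~\ref{consensus_time}, and the lower bound is obtained exactly as in the paper by showing that agents $2,\dots,n$ never leave a shrinking neighbourhood of $5$, so that agent $1$ (whose speed is capped at $\beta=1$ and which, by the necessary condition of Lemma~\ref{necessary_condition}, keeps $u_1=+\beta$ until $z_1$ first reaches $-\alpha$) must traverse a distance arbitrarily close to $5=2\beta T^{\ast}\big(X(0)\big)$. Where you genuinely diverge is in how the anchor estimate (the paper's Lemma~\ref{x_i_close_5}) is obtained. The paper does explicit update-by-update bookkeeping: it computes $t^2_i$, $t^3_i$ and the corresponding states for each group, orders the update instants, and propagates the averaging identity of Lemma~\ref{average_inside} through successive inter-update intervals. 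You instead exploit the persistence property (Lemmas~\ref{necessary_condition} and \ref{feasible}): once $|z_j|\leq\alpha$ holds for $j\in S_1$ it holds forever, and the identity $z_j=(y-x_1)+(n-1-r)(y-w)$ then pins $|y-w|\leq(\alpha+5)/(n-1-r)$ algebraically; the anchors' controls are then $O\big(r/(n-1-r)\big)$ in magnitude, and integrating over the bounded horizon bounds their total drift. This is cleaner and replaces the paper's somewhat laborious ``repeat the arguments in subsequent intervals'' step with a single invariant. Your two flagged obstacles are real but routine, and the paper's proof of Lemma~\ref{x_i_close_5} (parts 1--3) is precisely the verification that the outside transient ends by $t^2_{r+2}=3/(n-2)=O(1/n)$ with state shifts of $O(1/n)$: one checks $z_l\big(t^2_{r+2}\big)\in[2,3]\subset[-\alpha,\alpha]$ for $l\in S_1$ and then invokes the persistence lemma. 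Your second worry (the piecewise-constant controls versus a continuous limit) is unnecessary: the bound $|z_k|\leq r|w-y|=O\big(r/(n-1-r)\big)$ holds at every update instant of an anchor, hence $|u_k|\leq|z_k(t^m_k)|\beta/\alpha$ obeys the same bound on every inter-update interval, and the drift integral is exact rather than asymptotic. One bookkeeping point to make explicit when writing this up: the pinning inequality needs $|y-x_1|\leq 5$, which is supplied by Lemma~\ref{inside_initial_val}, and the order of quantifiers ($r_{\epsilon}$ first, then $n_{\epsilon}$) matches the paper's choice of $\widehat{r}$ and $n_{\widehat{\mu},\widehat{r}}$.
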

\begin{proof}
See the Appendix for the proof.
\end{proof}

\section{Optimality of Protocol $\Romannum{1}$}\label{Optimality_2T}
The objective in Problem \ref{prob_main} is to minimize the number of update instants under the constraint of achieving
$\alpha$-consensus of MAS (\ref{MAS}) within the pre-specified time.
Intuitively, if the duration between the successive update instants is increased, then
the number of update instants decreases. Motivated from this intuition, we obtain the solution of 
Problem \ref{prob_main} by maximizing the inter-update durations.
We divide the solution process into two steps.
First, we solve
two maximization problems, 
one corresponding to $|z_i(t^k_i)| \leq \alpha$ and the other corresponding to $|z_i(t^k_i)| > \alpha$,
in which the objective function is the inter-update duration.
The control rules
(\ref{update_instant})-(\ref{control}) and (\ref{update_instant_out})-(\ref{control_out}) in 
Protocol $\Romannum{1}$ are solutions of these maximization problems, respectively.
Later, we show how these two control rules together form the solution of Problem \ref{prob_main}.

\subsection{Maximization of inter-update durations: For $\big|z_i\big(t^k_i\big)\big| \leq \alpha$}\label{z_inside}
Let $z_i\big(t^k_i\big)$ be the local disagreement of agent $a_i$ at its update instant $t^k_i$ such that
$\big|z_i\big(t^k_i\big)\big| \leq \alpha$. The goal of agent $a_i$ is to maximize the inter-update duration $\big(t^{k+1}_i - t^k_i\big)$
by delaying its next update instant $t^{k+1}_i$. However, while doing so, agent $a_i$ must satisfy the necessary condition (\ref{retain_inside}).
For that purpose, agent $a_i$ needs to ensure that 
$|z_i(t)| \leq \alpha$ for all $t\in [t^k_i,t^{k+1}_i]$.
Recall the definition of $z_i$ in (\ref{local_dis}).
 Then, the evolution of $z_i$ in the interval $ [t^k_i,t^{k+1}_i]$ is  given as \begin{equation}\label{z_trajectory}
z_i\big(t\big) = z_i\big(t^k_i\big) + n_i \int_{t^k_i}^{t} u_i(\tau)d\tau - \sum_{j\in S_i} \bigg(\int_{t^k_i}^{t} u_j(\tau)d\tau\bigg)
\end{equation}
This evolution depends on control inputs $u_i$ and $u_j,~\forall j\in S_i$,
in the interval $[t^k_i,t^{k+1}_i]$. Note that any instant $t\in [t^k_i,t^{k+1}_i]$ can be an update instant of a neighbouring agent $a_j,~j \in S_i$, at which $a_j$ updates its control. This updated value is not known to agent $a_i$ in advance, at instant $t^k_i$.
Thus, while maximizing the inter-update duration $\big(t^{k+1}_i - t^k_i\big)$, agent $a_i$ needs to consider the worst-case 
realizations of the neighbouring inputs $u_j,~\forall j\in S_i$, which result in the minimum value of the inter-update duration. 
This leads to the following max-min optimization:
\begin{align}
u^{\ast}_i =  ~\max\limits_{ \substack{ u_i\in \mathcal{U}, \\ \widetilde{t} \in \mathbb{R}} }~& \min\limits_{ \substack{ u_j \in \mathcal{U},\\ \forall j\in S_i } } ~~~ \widetilde{t} - t^k_i \label{max_min_con}\\ 
t^{k+1}_i = \arg~ \max\limits_{ \substack{ u_i\in \mathcal{U}, \\ \widetilde{t} \in \mathbb{R}} }~&\min\limits_{ \substack{ u_j \in \mathcal{U},\\ \forall j\in S_i } } ~~~ \widetilde{t} - t^k_i \label{max_min_time}\\
 & ~~~~\text{s.t.}~~~ |z_i(t)|\leq \alpha,~~\forall t\in \big[t^k_i,\widetilde{t}\hspace*{0.1cm}\big] \label{constraint}
\end{align}

The following lemma shows that control law (\ref{update_instant})-(\ref{control}) is the solution of optimization (\ref{max_min_con})-(\ref{constraint}).
\begin{lemma}\label{opt_inside}
Consider any agent $a_i$ in MAS (\ref{MAS}).
 Let $z_i\big(t^k_i\big)$ be the local disagreement of $a_i$ at its update instant $t^k_i$ such that $\big|z_i\big(t^k_i\big)\big| \leq \alpha$.
 Then, the control law (\ref{update_instant})-(\ref{control}) is
 the solution of optimization (\ref{max_min_con})-(\ref{constraint}).
\end{lemma}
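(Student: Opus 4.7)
My plan is to first reduce the inner minimisation over the adversarial neighbours' inputs to a pointwise worst-case inequality on $u_i$ alone, then derive a necessary upper bound on $\widetilde{t} - t^k_i$, and finally show that the proposed control in (\ref{control}) attains this bound. The starting observation is that the cost $\widetilde{t} - t^k_i$ in (\ref{max_min_con}) does not depend on $u_j$, so the inner minimisation enforces (\ref{constraint}) in the worst case only: the outer pair $(u_i,\widetilde{t})$ is admissible iff $|z_i(t)| \leq \alpha$ on $[t^k_i,\widetilde{t}\,]$ for \emph{every} admissible choice of $\{u_j\}_{j \in S_i}$. Writing $h(t) := z_i(t^k_i) + n_i \int_{t^k_i}^{t} u_i(\tau)\, d\tau$ and $g(t) := \int_{t^k_i}^{t} \sum_{j \in S_i} u_j(\tau)\, d\tau$, equation (\ref{z_trajectory}) becomes $z_i(t) = h(t) - g(t)$ with $|g(t)| \leq \beta n_i (t - t^k_i)$, because each $|u_j| \leq \beta$. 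In particular the constant choices $u_j \equiv \pm\beta$ realise both endpoints of this range at every $t$, so the worst-case constraint collapses to
\[
\bigl|h(t)\bigr| \leq \alpha - \beta n_i (t - t^k_i), \qquad \forall t \in \bigl[t^k_i,\widetilde{t}\,\bigr].
\]

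From this inequality, necessity of (\ref{update_instant}) is immediate: since the right-hand side must remain non-negative, we need $\widetilde{t} - t^k_i \leq \alpha/(\beta n_i)$, and at $\widetilde{t} = t^k_i + \alpha/(\beta n_i)$ the bound forces $h(\widetilde{t}) = 0$. Consequently any optimal $u_i$ must drive $h$ from $z_i(t^k_i)$ to $0$ over exactly this horizon, that is, have average value $-z_i(t^k_i)\beta/\alpha$.

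For sufficiency I would verify directly that the constant $u^{\ast}_i(t) \equiv -z_i(t^k_i)\beta/\alpha$ from (\ref{control}) is admissible and satisfies the worst-case pointwise bound. Admissibility is clear because $|u^{\ast}_i| = |z_i(t^k_i)|\beta/\alpha \leq \beta$, using $|z_i(t^k_i)| \leq \alpha$. With this input, $h(t) = z_i(t^k_i)\bigl(1 - \beta n_i (t - t^k_i)/\alpha\bigr)$ interpolates linearly between $z_i(t^k_i)$ and $0$, and a short pointwise calculation (separating the two cases according to the sign of $z_i(t^k_i)$ and again invoking $|z_i(t^k_i)| \leq \alpha$) confirms $|h(t)| \leq \alpha - \beta n_i (t - t^k_i)$ throughout the interval. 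Hence the pair in (\ref{update_instant})-(\ref{control}) is feasible for the outer maximisation and attains the upper bound on $\widetilde{t} - t^k_i$, establishing optimality.

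The only delicate step is the first one, namely the collapse of the adversarial minimisation to the clean pointwise inequality on $h$. This relies on noting that although the neighbours can in principle deploy complicated time-varying strategies, the reachable set of $g(t)$ at each instant is already the full interval $[-\beta n_i (t - t^k_i),\, \beta n_i (t - t^k_i)]$, saturated by the trivial constant strategies $u_j \equiv \pm\beta$; once this is made precise, everything else is a direct computation.
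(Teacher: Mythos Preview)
Your proof is correct and takes a somewhat different route from the paper's. The paper establishes feasibility of (\ref{update_instant})--(\ref{control}) by invoking Lemma~\ref{average_inside}.2, and then proves optimality of the value $\alpha/(\beta n_i)$ by contradiction: assuming a longer horizon $T_i$ with some control $\widetilde{u}_i$, it case-splits on the endpoint value $\widetilde{z}_i(t^{k+1}_i):=z_i(t^k_i)+n_i\int_{t^k_i}^{t^k_i+T_i}\widetilde{u}_i$ and for each case exhibits a constant adversary $u_j\equiv 0$ or $u_j\equiv \pm\beta$ that forces $|z_i(t^{k+1}_i)|>\alpha$. Your argument instead collapses the robust constraint to the single pointwise inequality $|h(t)|\le \alpha-\beta n_i(t-t^k_i)$ on the whole interval, from which both feasibility and the tight upper bound on $\widetilde{t}-t^k_i$ drop out simultaneously, with no contradiction or case analysis. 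What your approach buys is a clean characterisation of \emph{all} feasible pairs $(u_i,\widetilde{t})$, not just the optimal one, and it makes transparent why the constant control (\ref{control}) is natural: it is exactly the linear interpolant driving $h$ to zero at the terminal time. The paper's approach is slightly more elementary in that it only needs to produce a violation at the single endpoint $t^{k+1}_i$, but at the cost of the three-way case split and of having to appeal separately to Lemma~\ref{average_inside}.2 for feasibility.
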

\begin{proof}
See the Appendix for the proof.
\end{proof}

\subsection{Maximization of inter-update durations: For $\big|z_i\big(t^k_i\big)\big|> \alpha$}\label{z_outside}
Let $z_i\big(t^k_i\big)$ be the local disagreement of agent $a_i$ at its update instant $t^k_i$ such that
$\big|z_i\big(t^k_i\big)\big| > \alpha$. 
Then, in order to achieve $\alpha$-consensus,  it is necessary to steer $\big|z_i\big(t^k_i\big)\big|$ below the consensus bound $\alpha$. If that could be done in time-optimal manner, it is an added advantage. Recall that in Section \ref{solution_flow}, we discussed the time-optimal consensus rule (\ref{opt_control}) which achieves conventional consensus of MAS (\ref{MAS}) in minimum time. Motivated from this rule, we choose our control rule as
\begin{equation}\label{control_out_1}
 u^{\ast}_i(t)= -\beta\hspace*{0.1cm} sign \Big(z_i\big(t^k_i\big)\Big),~~~~~~~~\forall t \in \big[t^k_i,t^{k+1}_i\big)
\end{equation}
As mentioned in Section \ref{z_inside}, the goal of agent $a_i$ is to 
maximize the inter-update duration $\big(t^{k+1}_i - t^k_i\big)$
by delaying its next update instant $t^{k+1}_i$. However, while doing so,
agent $a_i$ must satisfy the necessary condition (\ref{retain_inside}). 

Recall that $\big|z_i\big(t^k_i\big)\big| > \alpha$. Without loss of generality, assume that $z_i\big(t^k_i\big) < -\alpha$.
Then, it follows from (\ref{control_out_1}) that
$u^{\ast}_i(t)=\beta,~\forall t \in [t^k_i,t^{k+1}_i)$.
Let $\hat{t} \in [t^k_i,t^{k+1}_i)$ be the first time instant at which $z_i(\hat{t}~\!)=-\alpha$.
Then, in order to satisfy (\ref{retain_inside}), agent $a_i$ needs to ensure that
\begin{equation*}\label{z_inside_1}
 z_i(t) \in [-\alpha,\alpha],~~~~~~~~\forall t \in [\hat{t},t^{k+1}_i]
\end{equation*}
which is equivalent to 
\begin{equation*}
 z_i(t) \leq  \alpha, ~~~~~~~~~\forall t \in [\hat{t},t^{k+1}_i]
\end{equation*}

As discussed in Section \ref{z_inside}, at instant $t^k_i$, agent $a_i$ does not know the future values of the neighbouring inputs in the interval $[t^k_i,t^{k+1}_i)$.
Thus, while maximizing the inter-update duration $\big(t^{k+1}_i - t^k_i\big)$, agent $a_i$ needs to consider the worst-case 
realizations of the neighbouring inputs $u_j,~\forall j\in S_i$, which result in the minimum value of the inter-update duration. 
This leads to the following max-min optimization:
\begin{align}
t^{k+1}_i = \arg~ \max\limits_{ \widetilde{t} \in {\mathbb{R}} }~&\min\limits_{ \substack{ u_i=\beta,\\ u_j \in \mathcal{U},\\ \forall j\in S_i } } ~~~ \widetilde{t} - t^k_i \label{max_min_time_out}\\
 & ~~~~\text{s.t.}~~~ z_i(t) \leq \alpha,~~~\forall t\in \big[t^k_i,\widetilde{t}\hspace*{0.1cm}\big] \label{constraint_out}
\end{align}
The following lemma shows that control law (\ref{update_instant_out})-(\ref{control_out}) is the 
solution of optimization (\ref{max_min_time_out})-(\ref{constraint_out}).
\begin{lemma}\label{opt_outside}
Consider any agent $a_i$ in MAS (\ref{MAS}).
 Let $z_i\big(t^k_i\big)$ be the local disagreement of $a_i$ at its update instant $t^k_i$ such that $z_i\big(t^k_i\big) < -\alpha$.
 Then, the control law (\ref{update_instant_out})-(\ref{control_out}) is
 the solution of optimization (\ref{max_min_time_out})-(\ref{constraint_out}).
\end{lemma}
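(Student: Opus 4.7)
The plan is to mirror the structure of the proof of Lemma \ref{opt_inside}. I would begin by substituting the fixed choice $u_i(\tau) = \beta$ into the trajectory formula (\ref{z_trajectory}), giving
\[ z_i(t) = z_i(t^k_i) + n_i \beta (t - t^k_i) - \sum_{j \in S_i} \int_{t^k_i}^{t} u_j(\tau)\, d\tau. \]
Since the constraint (\ref{constraint_out}) requires $z_i(t) \leq \alpha$ throughout $[t^k_i, \widetilde{t}\,]$, the inner minimisation over the neighbour inputs is forced to pick the profile that makes this constraint bind as soon as possible, i.e., the one that \emph{maximises} $z_i(t)$ pointwise. As each $u_j$ enters linearly with a negative sign and is restricted to $|u_j|\leq\beta$, this worst case is the constant choice $u_j(\tau) \equiv -\beta$ for every $j \in S_i$, yielding the upper envelope
\[ \bar{z}_i(t) = z_i(t^k_i) + 2 n_i \beta (t - t^k_i). \]

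Next, because $z_i(t^k_i) < -\alpha$, the envelope $\bar{z}_i$ is a strictly increasing affine function starting below $-\alpha$, so the largest $\widetilde{t}$ for which $\bar{z}_i(t) \leq \alpha$ holds on $[t^k_i, \widetilde{t}\,]$ is exactly the crossing time. Solving $\bar{z}_i(\widetilde{t}\,) = \alpha$ gives
\[ \widetilde{t} - t^k_i = \frac{\alpha - z_i(t^k_i)}{2 n_i \beta} = \frac{|z_i(t^k_i)| + \alpha}{2 \beta n_i}, \]
which matches (\ref{update_instant_out}). Hence the outer maximisation is attained with the control (\ref{control_out}). The symmetric case $z_i(t^k_i) > \alpha$, handled by $u^{\ast}_i = -\beta$ and by flipping the role of the binding bound from $\alpha$ to $-\alpha$, follows by a mirror-image computation; together the two cases establish the full rule (\ref{update_instant_out})-(\ref{control_out}).

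The main obstacle I anticipate is the rigorous justification that $u_j \equiv -\beta$ is indeed the worst-case neighbour choice \emph{simultaneously} for every $t \in [t^k_i, \widetilde{t}\,]$, and in particular at the binding instant where $\bar{z}_i$ first touches $\alpha$. This will be argued by the elementary monotonicity observation that, for any alternative measurable $u_j$ with $|u_j|\leq \beta$, the quantity $-\int_{t^k_i}^{t} u_j(\tau)\,d\tau$ is pointwise bounded above by $\beta (t - t^k_i)$, so the resulting $z_i(t)$ never exceeds $\bar{z}_i(t)$, and hence the crossing time of any alternative trajectory is no earlier than that of $\bar{z}_i$. Once this is established, what remains is a linear-equation solution with no further complication, and the structure parallels that of Lemma \ref{opt_inside} very closely.
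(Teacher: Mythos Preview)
Your proposal is correct and takes essentially the same approach as the paper: both fix $u_i=\beta$, identify $u_j\equiv -\beta$ as the worst-case neighbour profile, and solve for the time at which $z_i$ reaches $\alpha$. The paper merely organises this as feasibility (showing $z_i(t^{k+1}_i)\leq\alpha$ for all $u_j$, plus $\dot z_i\geq 0$) followed by optimality-by-contradiction (taking $u_j=-\beta$ over an interval of length $T_i+\epsilon$), whereas you do both at once via the upper envelope $\bar z_i$; the content is the same.
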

\begin{proof}
See the Appendix for the proof.
\end{proof}

\begin{remark}\label{z_larger_1}
 The optimization (\ref{max_min_time_out})-(\ref{constraint_out}) and 
 Lemma \ref{opt_outside}
 correspond to the case $z_i\big(t^k_i\big) < -\alpha$. If  $z_i\big(t^k_i\big)> \alpha$, the constraint (\ref{constraint_out}) becomes
$z_i(t) \geq -\alpha ,~\forall t \in \big[t^k_i,\widetilde{t}\hspace*{0.08cm}\big]$.
Then, by following the arguments in the proof of Lemma \ref{opt_outside}, we can show that even for $z_i\big(t^k_i\big)> \alpha$, the control rule (\ref{update_instant_out})-(\ref{control_out})  
is the solution of optimization (\ref{max_min_time_out})-(\ref{constraint_out}). 
\end{remark}

\subsection{Feasibility of Protocol $\Romannum{1}$}\label{feasible1}
In this section, we present the proof of Lemma \ref{feasible} which claims that Protocol $\Romannum{1}$ satisfies the necessary condition (\ref{retain_inside}).
\begin{proof}\textit{of Lemma \ref{feasible}} : 
Consider MAS (\ref{MAS}) with any $n$, any connected communication graph $G$ and any initial condition $X(0) \in {\mathbb{R}}^n$.
Let $a_i$ be any agent in MAS (\ref{MAS}) and
$\widetilde{t}_i \in [t^k_i,t^{k+1}_i)$
be the first time instant under Protocol $\Romannum{1}$ at which $|z_i\big(\!\!~\widetilde{t}_i\big)| \leq \alpha$. If $|z_i\big(t^k_i\big)| \leq \alpha$, then it follows from (\ref{constraint}) and Lemma \ref{opt_inside} that
$|z_i(t)| \leq \alpha,~\forall t \in \big[~\!\widetilde{t}_i,t^{k+1}_i\big]$ under Protocol $\Romannum{1}$. On the other hand, if $|z_i\big(t^k_i\big)| > \alpha$, then it follows from  (\ref{constraint_out}), Lemma \ref{opt_outside} 
and Remark \ref{z_larger_1}
that $|z_i(t)| \leq \alpha,~\forall t \in \big[~\!\widetilde{t}_i,t^{k+1}_i\big]$ under Protocol $\Romannum{1}$. Consequently, in both cases, Lemma \ref{opt_inside} leads to $|z_i(t)| \leq \alpha,~\forall t > t^{k+1}_i$. This proves that Protocol $\Romannum{1}$ satisfies the necessary condition (\ref{retain_inside}).
\end{proof}
By using Lemma \ref{feasible}, it is already shown in Theorem \ref{consensus_time} that Protocol $\Romannum{1}$ satisfies constraint (\ref{constraint_main}) in
Problem \ref{prob_main} for time
$2T^{\ast}\big(X(0)\big)$, i.e., achieves $\alpha$ consensus of MAS (\ref{MAS})
within time $2T^{\ast}\big(X(0)\big)$, for every $n$, every connected $G$ and every 
$X(0) \in {\mathbb{R}}^n$. In the next section, we prove the optimality of Protocol $\Romannum{1}$.

\subsection{Proof of optimality of Protocol $\Romannum{1}$}\label{optimality}
Consider MAS (\ref{MAS}) with initial condition $X(0) \in {\mathbb{R}}^n$. Let $Q$ be any protocol which is a feasible solution of Problem \ref{prob_main} for  $T\big(X(0)\big)=2T^{\ast}\big(X(0)\big)$,
i.e., a
discrete communication based distributed protocol which achieves $\alpha$-consensus of MAS (\ref{MAS})
within time $2T^{\ast}\big(X(0)\big)$, for every $n$, every connected $G$ and every $X(0) \in {\mathbb{R}}^n$.
Let $C^Q_{MAS}\Big(2T^{\ast}\big(X(0)\big), X(0)\Big)$ and $C^{\ast}_{MAS}\Big(2T^{\ast}\big(X(0)\big),X(0)\Big)$ denote the value of the
aggregate communication cost $C_{MAS}$ defined in (\ref{agg_cost}), under Protocol $Q$ and Protocol $\Romannum{1}$, respectively.
\begin{theo}\label{proposed_optimal}
Consider MAS (\ref{MAS}) with initial condition $X(0) \in {\mathbb{R}}^n$. Then, under Protocol $\Romannum{1}$, the following holds.
\begin{equation}\label{C_less}
 C^{\ast}_{MAS}\Big(2T^{\ast}\big(X(0)\big), X(0)\Big) \leq C^Q_{MAS}\Big( 2T^{\ast}\big(X(0)\big), X(0)\Big)
\end{equation}
\end{theo}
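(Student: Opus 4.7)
The plan is to show, for each agent separately, that Protocol $\Romannum{1}$ uses the longest admissible spacing between successive updates of that agent, so that any feasible $Q$ is forced to have at least as many updates in $[0, 2T^{\ast}(X(0))]$; summing over the agents then yields (\ref{C_less}).

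First I would extract a pointwise upper bound on the inter-update duration under any feasible $Q$. Fix an agent $a_i$ and one of its update instants $t^{k,Q}_i$ under $Q$. Because $Q$ is feasible, Lemma \ref{necessary_condition} forces the necessary condition (\ref{retain_inside}) to hold, and because agent $a_i$ has no access to the future inputs of its neighbors, the pair $(u^Q_i, t^{k+1,Q}_i)$ committed at $t^{k,Q}_i$ must remain admissible against the worst-case realization of the $u_j$'s, $j \in S_i$. Thus it is a feasible point for the worst-case optimization (\ref{max_min_con})-(\ref{constraint}) when $|z_i(t^{k,Q}_i)| \leq \alpha$, or for (\ref{max_min_time_out})-(\ref{constraint_out}) when $|z_i(t^{k,Q}_i)| > \alpha$. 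Invoking Lemmas \ref{opt_inside} and \ref{opt_outside} then yields
\begin{equation*}
t^{k+1,Q}_i - t^{k,Q}_i \;\leq\; \Delta\bigl(z_i(t^{k,Q}_i)\bigr),
\end{equation*}
where $\Delta(\zeta) := \alpha/(\beta n_i)$ for $|\zeta|\leq\alpha$ and $\Delta(\zeta) := (|\zeta|+\alpha)/(2\beta n_i)$ for $|\zeta|>\alpha$. By (\ref{update_instant}) and (\ref{update_instant_out}), Protocol $\Romannum{1}$ saturates this bound at every one of its own update instants.

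Next I would upgrade this pointwise bound into the chronological domination $t^{k,Q}_i \leq t^{k,\ast}_i$ by induction on $k$, from which $C^Q_i\bigl(2T^{\ast}(X(0)), X(0)\bigr) \geq C^{\ast}_i\bigl(2T^{\ast}(X(0)), X(0)\bigr)$ follows immediately. The base case $k=1$ is trivial since both protocols perform their first broadcast at $t=0$. For the inductive step, given $t^{k,Q}_i \leq t^{k,\ast}_i$, it suffices to prove $\Delta(z^Q_i(t^{k,Q}_i)) \leq \Delta(z^{\ast}_i(t^{k,\ast}_i))$. In the inside phase this is immediate because $\Delta$ is the constant $\alpha/(\beta n_i)$. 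In the outside phase, I would couple the two protocols against the worst-case adversarial neighbor inputs identified inside the proofs of Lemmas \ref{opt_inside}-\ref{opt_outside}: against that adversary Protocol $\Romannum{1}$ already realizes the slowest admissible contraction of $|z_i|$, so $|z^Q_i(t^{k,Q}_i)| \leq |z^{\ast}_i(t^{k,\ast}_i)|$ and monotonicity of $\Delta$ in $|\zeta|$ closes the step. Summing the per-agent inequality $C^Q_i \geq C^{\ast}_i$ over $i \in P$ delivers (\ref{C_less}).

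The main obstacle is precisely the outside-phase step of the induction: because $Q$ is not required to use the bang-bang law (\ref{control_out}), the two trajectories $z^Q_i$ and $z^{\ast}_i$ generally diverge, and the comparison $|z^Q_i(t^{k,Q}_i)| \leq |z^{\ast}_i(t^{k,\ast}_i)|$ needed to activate $\Delta$-monotonicity is not purely formal. The resolution I anticipate is to exploit the freedom, granted by Lemma \ref{necessary_condition}, to pick the worst-case neighbor realization that is simultaneously adversarial for both protocols, and to verify that along this common worst-case realization Protocol $\Romannum{1}$'s bang-bang choice yields the slowest admissible decrease of $|z_i|$; the remainder of the argument (the per-step bound in Step 1 and the summation over agents in Step 3) is then routine.
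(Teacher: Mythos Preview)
Your proposal and the paper's proof rest on the same three ingredients---the per-agent decomposition via (\ref{agg_cost}), the necessary condition (\ref{retain_inside}) from Lemma \ref{necessary_condition}, and the max-min bounds on inter-update durations from Lemmas \ref{opt_inside} and \ref{opt_outside}---but the paper's argument is considerably more compact. Rather than a forward induction on update indices, the paper argues by contradiction: assuming $C^Q_{MAS} < C^{\ast}_{MAS}$ yields an agent $a_i$ with $C^Q_i < C^{\ast}_i$, and the paper infers from this directly that at least one inter-update duration of $a_i$ under $Q$ must exceed the value prescribed by (\ref{update_instant}) or (\ref{update_instant_out}); Lemmas \ref{opt_inside}/\ref{opt_outside} then force a violation of (\ref{constraint}) or (\ref{constraint_out}), hence of (\ref{retain_inside}), contradicting feasibility of $Q$.

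In particular, the paper never attempts the chronological domination $t^{k,Q}_i \leq t^{k,\ast}_i$ or the trajectory comparison $|z^Q_i(t^{k,Q}_i)| \leq |z^{\ast}_i(t^{k,\ast}_i)|$ that you set up in Step~2, and therefore never confronts the outside-phase coupling obstacle you highlight. Your proposed resolution via a common worst-case adversary is machinery the paper does not introduce; the paper's contradiction route works only with $Q$'s own trajectory and your Step~1 bound, bypassing the two-trajectory comparison entirely. What your approach buys is an explicit, step-by-step audit of why fewer updates are impossible; what the paper's approach buys is brevity, at the cost of leaving the passage from ``$C^Q_i < C^{\ast}_i$'' to ``some $Q$-interval exceeds the max-min bound'' as a one-line assertion rather than a worked-out comparison.
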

\begin{proof}
 For the sake of contradiction, assume that
 \begin{equation}\label{less_Q}
  C^Q_{MAS}\Big(2T^{\ast}\big(X(0)\big), X(0)\Big) < C^{\ast}_{MAS}\Big( 2T^{\ast}\big(X(0)\big), X(0)\Big)
 \end{equation}
 Recall that $C_i\big(t,X(0)\big)$ denotes the number of update instants of agent $a_i$ in the interval $[0,t)$, corresponding to initial condition $X(0)$.
Let $C^Q_i \Big(2T^{\ast}\big(X(0)\big), X(0)\Big)$ and $C^{\ast}_i\Big(2T^{\ast}\big(X(0)\big),X(0)\Big)$ denote the value of
$C_i\Big(2T^{\ast}\big(X(0),X(0)\big) \Big)$, under Protocol $Q$ and Protocol $\Romannum{1}$, respectively.
Then, (\ref{agg_cost}) and (\ref{less_Q}) imply that there exists at least one agent, say $a_i$, such that
\begin{equation*}
 C^Q_i\Big(2T^{\ast}\big(X(0)\big), X(0)\Big) < C^{\ast}_i\Big( 2T^{\ast}\big(X(0)\big), X(0)\Big)
\end{equation*}
This implies that under Protocol $Q$, at least one inter-update duration  of agent $a_i$
is longer than that prescribed by control laws (\ref{update_instant}) and (\ref{update_instant_out})  in Protocol $\Romannum{1}$.

Recall from (\ref{max_min_time}) and Lemma \ref{opt_inside} that control law (\ref{update_instant}) gives the maximum inter-update duration under constraint (\ref{constraint}). Similarly, it follows from (\ref{max_min_time_out}) and Lemma \ref{opt_outside} that control law (\ref{update_instant_out}) gives the maximum inter-update duration under constraint (\ref{constraint_out}).
Then, as one inter-update duration under Protocol $Q$ is longer than that prescribed by control laws  (\ref{update_instant}) and (\ref{update_instant_out}), Protocol $Q$ must be violating either  constraint (\ref{constraint}) or constraint
(\ref{constraint_out}). 
 Recall that violation of (\ref{constraint}) or  (\ref{constraint_out}) by Protocol $Q$ results in the violation of  necessary 
 condition (\ref{retain_inside}) on feasible protocols. 
 This contradicts the fact that Protocol $Q$ is a feasible solution of Problem \ref{prob_main}
 and proves claim (\ref{C_less}).
\end{proof}

\section{Protocol $\Romannum{2}$: For $T\big(X(0)\big)>2T^{\ast}\big(X(0)\big)$}\label{T_larger}
In Section \ref{optimality}, we proved that Protocol $\Romannum{1}$ 
is the solution of Problem \ref{prob_main}
for $T\big(X(0)\big)=2T^{\ast}\big(X(0)\big)$. In this section, we extend Protocol $\Romannum{1}$ for 
$T\big(X(0)\big)>2T^{\ast}\big(X(0)\big)$. We refer to the extended protocol as Protocol $\Romannum{2}$.

\subsection{Protocol $\Romannum{2}$}\label{our_for_larger}
Let $\gamma > 1$ be a real number and $T\big(X(0)\big)=2\gamma T^{\ast}\big(X(0)\big)$ be the pre-specified $\alpha$-consensus time.
Define $\widetilde{\beta}:=\dfrac{\beta}{\gamma}$. Then, Protocol $\Romannum{2}$ is same as Protocol $\Romannum{1}$, except the 
control bound 
$\widetilde{\beta}$ in place of $\beta$.

\subsection{Optimality of Protocol $\Romannum{2}$}
In this section, we first show that Protocol $\Romannum{2}$ achieves $\alpha$-consensus of MAS (\ref{MAS}) within the pre-specified time $T\big(X(0)\big)=2\gamma T^{\ast}\big(X(0)\big)$. 
\begin{theo}\label{consensus_time_larger}
Consider MAS (\ref{MAS}). 
 Let $\alpha \in {\mathbb{R}}^{+}$
 be the pre-specified consensus bound.
 Then, for every $n$, every connected communication graph $G$ and every $X(0) \in {\mathbb{R}}^n$, Protocol $\Romannum{2}$
 achieves $\alpha$-consensus of MAS (\ref{MAS}) in time less than or equal to $2\gamma T^{\ast}\big(X(0)\big)$. 
 Moreover, there exist $\widetilde{n}$, connected graph $\widetilde{G}$
 and initial condition $\widetilde{X}(0) \in {\mathbb{R}}^{\widetilde{n}}$ 
 for which $\alpha$-consensus time under Protocol $\Romannum{2}$ is arbitrarily close to $2\gamma T^{\ast}\big(X(0)\big)$.
\end{theo}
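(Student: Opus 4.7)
The plan is to reduce Theorem \ref{consensus_time_larger} to Theorems \ref{consensus_time} and \ref{strict_theo_1} by exploiting the fact that Protocol $\Romannum{2}$ is obtained from Protocol $\Romannum{1}$ by the single substitution $\beta \mapsto \widetilde{\beta} = \beta/\gamma$ in every update-instant and control-input expression. Since $\widetilde{\beta} < \beta$, the resulting inputs remain in $\mathcal{U}$, so Protocol $\Romannum{2}$ is a bona fide admissible discrete-communication protocol. The key observation is that the proofs of Theorems \ref{consensus_time} and \ref{strict_theo_1} depend on the control bound only through (i) the update-rule expressions (\ref{update_instant}) and (\ref{update_instant_out}), and (ii) the worst-case neighbour bound $|u_j|\le \beta$, both of which scale homogeneously in the control bound.

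First, to establish the upper bound $2\gamma T^{\ast}\!\big(X(0)\big)$, I would view Protocol $\Romannum{2}$ as Protocol $\Romannum{1}$ executed on a MAS of the form (\ref{MAS}) whose control set is $\widetilde{\mathcal{U}}=\{u\in\mathcal{M}\,:\,|u(t)|\le\widetilde{\beta}\}$. Using formula (\ref{min_time_expression}) with $\widetilde{\beta}$ in place of $\beta$, the associated time-optimal consensus time becomes
\[
\widetilde{T}^{\ast}\!\big(X(0)\big) \;=\; \frac{x^{max}(0)-x^{min}(0)}{2\widetilde{\beta}} \;=\; \gamma\, T^{\ast}\!\big(X(0)\big).
\]
Applying Theorem \ref{consensus_time} verbatim to this rescaled setting yields $\alpha$-consensus within $2\widetilde{T}^{\ast}\!\big(X(0)\big)=2\gamma T^{\ast}\!\big(X(0)\big)$, for every $n$, every connected $G$ and every $X(0)\in\mathbb{R}^n$. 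This is precisely the first claim.

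Second, for the tightness claim, I would apply Theorem \ref{strict_theo_1} in the same rescaled setting. That theorem, given any $\epsilon>0$, produces integers $r_\epsilon$, $n_\epsilon$ and an initial condition of the Example \ref{strict_bound} form for which the $\alpha$-consensus time of Protocol $\Romannum{1}$ with control bound $\widetilde{\beta}$ satisfies
\[
2\widetilde{T}^{\ast}\!\big(X(0)\big) - \epsilon \;\le\; T\!\big(\alpha,X(0)\big) \;\le\; 2\widetilde{T}^{\ast}\!\big(X(0)\big).
\]
Setting $\widetilde{n}=n_\epsilon$, choosing $\widetilde{G}$ as the corresponding graph $G_e$ and $\widetilde{X}(0)$ as the initial condition from Example \ref{strict_bound}, the $\alpha$-consensus time of Protocol $\Romannum{2}$ on $(\widetilde{n},\widetilde{G},\widetilde{X}(0))$ is sandwiched in $[2\gamma T^{\ast}\!\big(X(0)\big)-\epsilon,\;2\gamma T^{\ast}\!\big(X(0)\big)]$, proving arbitrary closeness.

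The only real obstacle is the bookkeeping step of checking that both preceding theorems are indeed invariant under the substitution $\beta\mapsto\widetilde{\beta}$: one must confirm that $\beta$ appears in their proofs only through the two update rules in Protocol $\Romannum{1}$ and through the admissibility constraint $|u_j|\le\beta$ used in the worst-case analysis of Sections \ref{z_inside}--\ref{z_outside}. Since all agents running Protocol $\Romannum{2}$ produce inputs with $|u_j|\le\widetilde{\beta}\le\beta$, the worst-case bound used in those derivations remains valid; hence no part of the previous reasoning needs to be redone, and the theorem follows as a direct corollary.
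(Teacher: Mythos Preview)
Your proposal is correct and follows essentially the same approach as the paper: the paper's proof simply notes that Protocol $\Romannum{2}$ is Protocol $\Romannum{1}$ with control bound $\widetilde{\beta}=\beta/\gamma$, so the dynamics are $\gamma$ times slower, and then invokes the arguments of Theorems \ref{consensus_time} and \ref{strict_theo_1}. Your parameter-substitution viewpoint (replace $\beta$ by $\widetilde{\beta}$ and observe $\widetilde{T}^{\ast}=\gamma T^{\ast}$) and the paper's time-rescaling viewpoint are equivalent, and your added remark that all agents running Protocol $\Romannum{2}$ satisfy $|u_j|\le\widetilde{\beta}$, so the substituted worst-case bound is legitimate, is exactly the bookkeeping the paper leaves implicit.
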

\begin{proof}
 Recall that the control bounds in Protocols $\Romannum{1}$ and $\Romannum{2}$ are
 $\beta$ and $\widetilde{\beta}=\dfrac{\beta}{\gamma}$, respectively.
 Thus, the dynamics of MAS (\ref{MAS}) under Protocol $\Romannum{2}$ 
 is $\gamma$ times slower than that of under Protocol $\Romannum{1}$. Then, the claim follows from the arguments in the proofs of Theorems \ref{consensus_time} and \ref{strict_theo_1}.
\end{proof}

Now, we prove the optimality of Protocol $\Romannum{2}$  for $T\big(X(0)\big)=2\gamma T^{\ast}\big(X(0)\big)$.
Consider MAS (\ref{MAS}) with an initial condition $X(0) \in {\mathbb{R}}^n$. Let $Q$ be any protocol which is a feasible solution of Problem \ref{prob_main} for $T\big(X(0)\big)=2\gamma T^{\ast}\big(X(0)\big)$.
Let $C^Q_{MAS}\Big(2\gamma T^{\ast}\big(X(0)\big), X(0)\Big)$ and $C^{\ast}_{MAS}\Big(2\gamma T^{\ast}\big(X(0)\big),X(0)\Big)$ denote the value of $C_{MAS}$ defined in (\ref{agg_cost}), under Protocol $Q$ and Protocol $\Romannum{2}$, respectively.
\begin{theo}\label{proposed_optimal_larger}
Consider MAS (\ref{MAS}) with initial condition $X(0) \in {\mathbb{R}}^n$. Then, under Protocol $\Romannum{2}$, the following holds.
\begin{equation*}
 C^{\ast}_{MAS}\Big(2\gamma T^{\ast}\big(X(0)\big), X(0)\Big) \leq C^Q_{MAS}\Big( 2\gamma T^{\ast}\big(X(0)\big), X(0)\Big)
\end{equation*}
\end{theo}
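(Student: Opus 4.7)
The plan is to mirror the proof of Theorem \ref{proposed_optimal} verbatim after the substitution $\beta \leftarrow \widetilde{\beta} = \beta/\gamma$. Since Protocol $\Romannum{2}$ is obtained from Protocol $\Romannum{1}$ by replacing $\beta$ by $\widetilde{\beta}$, and since $(x^{max}(0)-x^{min}(0))/(2\widetilde{\beta}) = \gamma T^{\ast}(X(0))$, the horizon $2\gamma T^{\ast}(X(0))$ in the statement coincides with the ``twice time-optimal'' horizon for the auxiliary MAS whose control bound is $\widetilde{\beta}$. In this auxiliary view, Protocol $\Romannum{2}$ plays exactly the role that Protocol $\Romannum{1}$ played in Section \ref{Optimality_2T}.

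First I would re-derive the $\widetilde{\beta}$-analogs of the max-min optimizations (\ref{max_min_con})-(\ref{constraint}) and (\ref{max_min_time_out})-(\ref{constraint_out}); the closed-form solutions (\ref{update_instant})-(\ref{control_out}) with $\widetilde{\beta}$ replacing $\beta$ drop out by the proofs of Lemmas \ref{opt_inside} and \ref{opt_outside}, and the feasibility statement of Lemma \ref{feasible} transfers unchanged. Next I would apply the contradiction template of Section \ref{optimality}: assume $C^Q_{MAS}\bigl(2\gamma T^{\ast}(X(0)),X(0)\bigr) < C^{\ast}_{MAS}\bigl(2\gamma T^{\ast}(X(0)),X(0)\bigr)$. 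Then some agent $a_i$ has at least one inter-update interval under $Q$ that is strictly longer than the corresponding interval prescribed by Protocol $\Romannum{2}$. By the $\widetilde{\beta}$-versions of Lemmas \ref{opt_inside} and \ref{opt_outside}, any duration exceeding the Protocol $\Romannum{2}$ value forces a violation of constraint (\ref{constraint}) or (\ref{constraint_out}), hence of the necessary condition (\ref{retain_inside}), contradicting feasibility of $Q$ and establishing the theorem.

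The main obstacle, and the point requiring the most care, is that Problem \ref{prob_main} still defines feasibility with respect to the unthrottled input set $\mathcal{U}$ of bound $\beta$, so a priori $Q$ could try to exploit the extra margin $(\beta - \widetilde{\beta})$ to elongate some inter-update interval beyond Protocol $\Romannum{2}$'s prescription. The way to close this gap is to invoke the tightness half of Theorem \ref{consensus_time_larger}: in the Example \ref{strict_bound}-type configuration the horizon $2\gamma T^{\ast}(X(0))$ is achieved essentially without slack, so any agent that postpones an update beyond the Protocol $\Romannum{2}$ value under $Q$ allows the worst-case adversarial evolution of $z_i$ (realized in this extremal configuration) to push $|z_i|$ above $\alpha$ before the next update. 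Together with the monotonicity of the max-min inter-update with respect to the admissible neighbour bound, this locks $Q$'s inter-update durations to at most those of Protocol $\Romannum{2}$ in the extremal instance, furnishing the contradiction and completing the proof.
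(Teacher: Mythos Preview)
Your overall plan---reduce to the argument of Theorem \ref{proposed_optimal}---is precisely what the paper does; its entire proof of Theorem \ref{proposed_optimal_larger} is the single sentence ``The claim follows from the arguments in the proof of Theorem \ref{proposed_optimal}.'' So the first two paragraphs of your proposal are on target.

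The ``main obstacle'' you flag, however, is a red herring, and the repair via the tightness half of Theorem \ref{consensus_time_larger} is not needed (and is not what the paper does). You have the monotonicity backwards. In the max--min optimizations (\ref{max_min_con})--(\ref{constraint}) and (\ref{max_min_time_out})--(\ref{constraint_out}) the inner minimisation is over neighbour inputs $u_j\in\mathcal U$, i.e.\ over the \emph{full} bound $\beta$. Larger neighbour authority means the adversary can drive $z_i$ faster, which \emph{shortens} the safe inter-update duration. Hence the \emph{original} Lemmas \ref{opt_inside} and \ref{opt_outside} (with $\beta$, not $\widetilde\beta$) already force every feasible $Q$ to use inter-update durations at most $\alpha/(\beta n_i)$ in the inside case and $(|z_i(t^k_i)|+\alpha)/(2\beta n_i)$ in the outside case. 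Protocol~$\Romannum{2}$'s durations are exactly $\gamma>1$ times these. So if some $Q$-interval exceeded the corresponding Protocol~$\Romannum{2}$ interval it would a fortiori exceed the Protocol~$\Romannum{1}$ value, and the unrescaled lemmas yield the violation of (\ref{retain_inside}). There is no need to pass to $\widetilde\beta$-versions of the lemmas, nor to invoke the extremal configuration of Example \ref{strict_bound}.
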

\begin{proof}
 The claim follows from the arguments in the proof of Theorem \ref{proposed_optimal}.
\end{proof}

Next, we analyze the effect of $T\big(X(0)\big)$ on $C^{\ast}_{MAS}$. Intuitively, with the increase in $T\big(X(0)\big)$, agents in MAS (\ref{MAS}) can afford to delay their next update instants, which would result in lower $C^{\ast}_{MAS}$. However, the following theorem shows that the above intuition is not true after $T\big(X(0)\big)=2 T^{\ast}\big(X(0)\big)$. 
\begin{theo}\label{C_MAS_constant}
Consider MAS (\ref{MAS}) with initial condition $X(0) \in {\mathbb{R}}^n$. Then, under Protocol $\Romannum{2}$, for every real $\gamma > 1$,
the following holds.
\begin{equation}\label{C_MAS_constant_2T}
 C^{\ast}_{MAS}\Big(2\gamma T^{\ast}\big(X(0)\big), X(0)\Big) = C^{\ast}_{MAS}\Big( 2 T^{\ast}\big(X(0)\big), X(0)\Big)
\end{equation}
\end{theo}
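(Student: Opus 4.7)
The plan is to establish a time-rescaling bijection between the execution of Protocol I (on which the RHS of \eqref{C_MAS_constant_2T} is based, since Protocol II with $\gamma = 1$ is Protocol I) and the execution of Protocol II with factor $\gamma > 1$, both starting from the same initial condition $X(0)$. The key observation is that Protocol II differs from Protocol I only by the replacement $\beta \mapsto \widetilde\beta = \beta/\gamma$, and crucially this replacement appears in both the control magnitudes \eqref{control}, \eqref{control_out} and the inter-update durations \eqref{update_instant}, \eqref{update_instant_out}. The first scales the state velocity by $1/\gamma$, the second stretches each update interval by exactly $\gamma$. These two effects cancel in terms of state increments, so the trajectories agree up to the time rescaling $t \mapsto t/\gamma$.

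Concretely, I would let $x^{I}_i(\cdot), z^{I}_i(\cdot), \{t^{k,I}_i\}_{k\geq 1}$ denote the trajectories and update instants produced by Protocol I, and $x^{II}_i(\cdot), z^{II}_i(\cdot), \{t^{k,II}_i\}_{k \geq 1}$ the corresponding quantities under Protocol II. I would prove by joint induction over the totally ordered set of update instants across all agents that
\[
t^{k,II}_i = \gamma\, t^{k,I}_i, \qquad x^{II}_i(\gamma t) = x^{I}_i(t), \qquad z^{II}_i(\gamma t) = z^{I}_i(t)
\]
for all $i \in P$ and all $k \geq 1$. The base case is immediate because $t^{1,I}_i = t^{1,II}_i = 0$ and $X(0)$ is shared. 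For the inductive step, on the interval $[t^{k,I}_i, t^{k+1,I}_i)$ under Protocol I the control is a constant (either $-\beta z_i(t^{k,I}_i)/\alpha$ or $-\beta\,\mathrm{sign}(z_i(t^{k,I}_i))$), while on $[\gamma t^{k,I}_i, \gamma t^{k+1,I}_i)$ under Protocol II it is the same expression with $\beta$ replaced by $\widetilde\beta = \beta/\gamma$. Direct integration $\int_{t^{k,I}_i}^{t} u^{I}_i \, d\tau = \int_{\gamma t^{k,I}_i}^{\gamma t} u^{II}_i \, d\tau$ (which holds because the $\gamma$ in the interval length cancels the $1/\gamma$ in the integrand) yields the trajectory identity, and substituting into \eqref{update_instant} or \eqref{update_instant_out} yields $t^{k+1,II}_i - t^{k,II}_i = \gamma (t^{k+1,I}_i - t^{k,I}_i)$, closing the induction.

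Once the bijection $t^{k,II}_i = \gamma\, t^{k,I}_i$ is established for every agent and every $k$, counting update instants is trivial: for each agent $a_i$,
\[
C^{II}_i\bigl(2\gamma T^{\ast}(X(0)), X(0)\bigr) = \bigl|\{k : t^{k,II}_i \leq 2\gamma T^{\ast}\}\bigr| = \bigl|\{k : t^{k,I}_i \leq 2T^{\ast}\}\bigr| = C^{I}_i\bigl(2T^{\ast}(X(0)), X(0)\bigr),
\]
and summing over $i \in P$ via \eqref{agg_cost} yields \eqref{C_MAS_constant_2T}.

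The main obstacle is the bookkeeping of the induction in the asynchronous setting: the globally ordered sequence of update events interleaves different agents, so at each step one must verify that the $z^{II}_i(t^{k,II}_i) = z^{I}_i(t^{k,I}_i)$ identity (which drives the next update rule) is already available from the inductive hypothesis applied to the neighbours' past update instants (whose broadcast values agent $a_i$ uses to reconstruct $z_i$). This requires ordering the global sequence of update events by their Protocol-I timestamps and processing them in that order; the inductive step for agent $a_i$ at $t^{k,I}_i$ then only needs the already-scaled Protocol-II images of strictly earlier neighbour update instants, which is sound because all prior events in the ordering have smaller Protocol-I timestamps and hence smaller Protocol-II timestamps (multiplication by $\gamma > 0$ is order-preserving). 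Everything else is routine.
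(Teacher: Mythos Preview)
Your proposal is correct and follows essentially the same approach as the paper: both establish the time-rescaling relation $t^{k,II}_i = \gamma\, t^{k,I}_i$ (the paper writes $t^{k,\widetilde\beta}_i = \gamma\, t^{k,\beta}_i$) and then observe that counting update instants in $[0,2\gamma T^{\ast}]$ under Protocol~II equals counting them in $[0,2T^{\ast}]$ under Protocol~I. The paper simply asserts this scaling as ``easy to show'' from the $\gamma$-slower dynamics, whereas you spell out the inductive bookkeeping over the globally ordered update events, which is a more careful (but not genuinely different) execution of the same idea.
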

\begin{proof}
Let $t^{k,\!~\beta}_i$ and $t^{k,\!~\widetilde{\beta}}_i$ denote the $k$th update instants of agent $a_i$
 under Protocols $\Romannum{1}$ and $\Romannum{2}$, respectively.
 Recall that the dynamics of MAS (\ref{MAS}) under Protocol $\Romannum{2}$ 
 is $\gamma$ times slower than that of under Protocol $\Romannum{1}$.
Then, it is easy to show that
 \begin{equation*}
 t^{k,\!~\widetilde{\beta}}_i = \gamma ~\!t^{k,\!~\beta}_i,~~~~~~~~\forall i \in P,~~~~~~~~~\forall k \geq 1
 \end{equation*}
 Hence, the timeline of agent $a_i$ under Protocol $\Romannum{2}$ is just the $\gamma$-scaled version of its timeline 
 under Protocol $\Romannum{1}$.
 As a result, the number of update instants of agent $a_i$ in the interval $\big[0,\!~2\gamma T^{\ast}\big(X(0)\big)\big]$ under Protocol $\Romannum{2}$ is equal to that of in the interval $\big[0,\!~2 T^{\ast}\big(X(0)\big)\big]$ under Protocol $\Romannum{1}$. Then, the claim (\ref{C_MAS_constant_2T}) follows from the definition of $C_{MAS}$ in (\ref{agg_cost}).
\end{proof}

\section{Simulation results}\label{simulation}
In this section, we present the simulation results obtained under Protocol $\Romannum{2}$ for various $T\big(X(0)\big)$'s.
Consider MAS (\ref{MAS}) with $n=6$ agents and the communication graph shown in Fig. \ref{Simulation_graph}.
 \begin{figure}[h]
\centering
\begin{tikzpicture}[scale=0.39]

\draw (3,1) circle (1cm);

\draw (3,-4.5) circle (1cm);

\draw (7,-1.75) circle (1cm);

\draw (13,-1.75) circle (1cm);

 \draw (17,1) circle (1cm);
 
 \draw (17,-4.5) circle (1cm);
 
 \draw [] (3.8,0.44) to (6.2,-1.2);
 
 \draw [] (3.8,-3.94) to (6.2,-2.3);
 
 \draw [] (8,-1.75) to (12,-1.75);
 
 \draw [] (16.2,0.44) to (13.8,-1.2);
 
  \draw [] (16.2,-3.94) to (13.8,-2.3);
  
  \node at (3,1) {$1$};
  \node at (3,-4.5) {$2$};
  \node at (7,-1.75) {$3$};
  \node at (13,-1.75) {$4$};
  \node at (17,1) {$5$};
  
  \node at (17,-4.5) {$6$};

\end{tikzpicture}
\caption{Communication graph for simulation}
\label{Simulation_graph}
\end{figure}
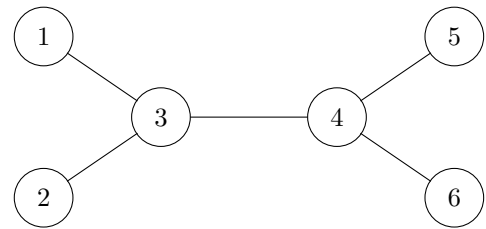
Let the consensus bound and the control bound  be $\alpha=0.6$ and $\beta=1$, respectively. Let the initial condition of the MAS be $X(0)=[7,2,4,3,1,5]$. Then, it follows from the definition of $T^{\ast}\big(X(0)\big)$ in (\ref{min_time_expression}) that $T^{\ast}\big(X(0)\big)=3$ sec.

We simulated the above MAS under Protocol $\Romannum{2}$ for $T\big(X(0)\big)= 2T^{\ast}\big(X(0)\big)$, $ T\big(X(0)\big)=10T^{\ast}\big(X(0)\big)$ and $T\big(X(0)\big)= 20T^{\ast}\big(X(0)\big)$. The corresponding disagreement trajectories $z_i$'s are plotted in Fig. \ref{Simulation_equal_time}, \ref{Simulation_multiplier_5} and \ref{Simulation_multiplier_10}, respectively. The corresponding $\alpha$-consensus times and optimal communication costs are given in Table $\Romannum{1}$. 

Notice that in each of the above three cases, Protocol $\Romannum{2}$ achieves $\alpha$-consensus within the prespecified time $T\big(X(0)\big)$, as proved in Theorem \ref{consensus_time_larger}. Further, the $\alpha$-consensus times corresponding to $T\big(X(0)\big)= 10T^{\ast}\big(X(0)\big)$ and $T\big(X(0)\big)= 20T^{\ast}\big(X(0)\big)$ are $5$ and $10$ times of that corresponding to $T\big(X(0)\big)= 2T^{\ast}\big(X(0)\big)$, respectively. This observation is in accordance with the arguments in the proof of Theorem \ref{consensus_time_larger}. In addition, observe that the trajectories in Figures \ref{Simulation_multiplier_5} and \ref{Simulation_multiplier_10} are just the time-stretched versions of the corresponding trajectories in Fig. \ref{Simulation_equal_time}. The optimal communication costs $C^{\ast}_i$'s and $C^{\ast}_{MAS}$ corresponding to
 $T\big(X(0)\big)= 2T^{\ast}\big(X(0)\big)$,
 $T\big(X(0)\big)= 10T^{\ast}\big(X(0)\big)$ and  $T\big(X(0)\big)= 20T^{\ast}\big(X(0)\big)$ are same, as proved in Theorem \ref{C_MAS_constant}.

\begin{figure}
     \centering
     \begin{subfigure}[t]{1\columnwidth}
         \centering
         \includegraphics[width=\textwidth]{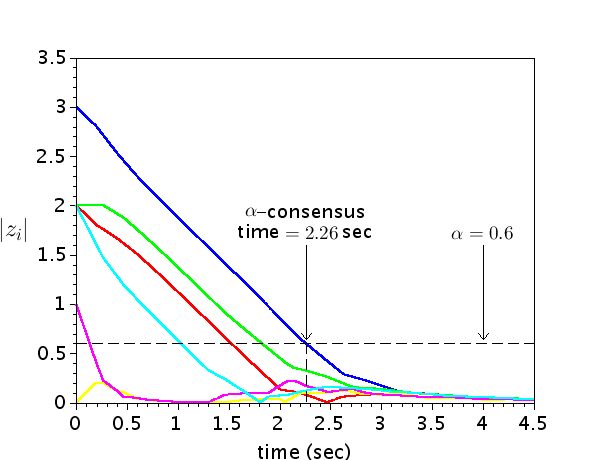}
         \caption{$T\big(X(0)\big) = 2T^{\ast}\big(X(0)\big)$}
         \label{Simulation_equal_time}
         \qquad
     \end{subfigure}
     \begin{subfigure}[t]{1\columnwidth}
         \centering
         \includegraphics[width=\textwidth]{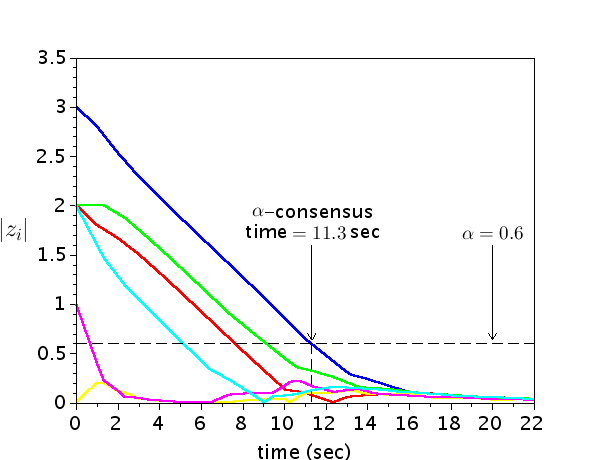}
         \caption{$T\big(X(0)\big) = 10T^{\ast}\big(X(0)\big)$}
         \label{Simulation_multiplier_5}
         \qquad
     \end{subfigure}
     \begin{subfigure}[t]{1\columnwidth}
         \centering
         \includegraphics[width=\textwidth]{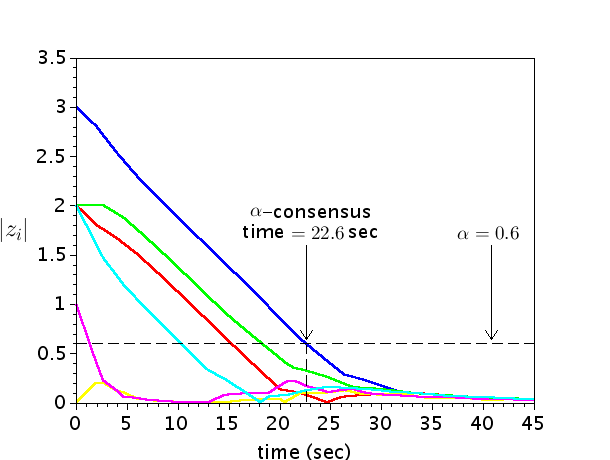}
         \caption{$T\big(X(0)\big) = 20T^{\ast}\big(X(0)\big)$}
         \label{Simulation_multiplier_10}
         \qquad
     \end{subfigure}
     \caption{Comparison of local disagreement trajectories under Protocol $\Romannum{2}$ for various $T\big(X(0)\big)$'s}
     \label{Simulation_disgreement}
\end{figure}

\begin{table*}
 \caption{Consensus time and communication cost under Protocol $\Romannum{2}$ for various $T\big(X(0)\big)$'s}
\label{simulation_table}
\begin{tabularx}{\textwidth}{@{}l*{10}{C}c@{}}
\toprule
$T\big(X(0)\big)$ & $\alpha$-consensus time [sec] & $C^{\ast}_1$ &
$C^{\ast}_2$ &
$C^{\ast}_3$ &
$C^{\ast}_4$ &
$C^{\ast}_5$ &
$C^{\ast}_6$ & 
$C^{\ast}_{MAS}$ \\
\midrule
\midrule
$2T^{\ast}\big(X(0)\big)$ & 
 $2.26$ & $8$ & $9$ & $30$ & $30$ & $9$ & $9$ & $95$ \\
 \midrule
$10T^{\ast}\big(X(0)\big)$ & 
 $11.3$ & $8$ & $9$ & $30$ & $30$ & $9$ & $9$ & $95$\\ 
 \midrule
 $20T^{\ast}\big(X(0)\big)$ & 
 $22.6$ & $8$ & $9$ & $30$ & $30$ & $9$ & $9$ & $95$\\
\bottomrule
\end{tabularx}
\end{table*}

\section{Conclusion and future work}\label{conclusion1}
In this paper, a distributed consensus protocol (Protocol $\Romannum{2}$) is proposed for a MAS of single-integrator agents with bounded inputs and time-invariant communication graph. 
We showed (Theorems \ref{consensus_time_larger} and \ref{proposed_optimal_larger}) that the proposed protocol minimizes the aggregate number of communication instants required to achieve $\alpha$-consensus within the pre-specified time $T\big(X(0)\big)\geq 2T^{\ast}\big(X(0)\big)$. The control rules in the proposed protocol are obtained by maximizing the inter-update durations. We computed (Lemmas \ref{opt_inside} and \ref{opt_outside}) the closed form solutions of these optimizations. Finally, we presented the simulation results which verify the theoretical claims. The work of extending the proposed protocol to MASs with complex dynamics is in progress.

\section*{Appendix}

\subsection*{A.~Intermediate results for the proof of Theorem \ref{consensus_time} }
In this section, we develop a few intermediate results which will be used later in the proof of Theorem \ref{consensus_time}.
Let $x_i\big(t^k_i\big)$ and $z_i\big(t^k_i\big)$ be the state and the local disagreement of agent $a_i$ respectively, at its update instant $t^k_i$. 
Recall that $S_i$ is the set of indices of neighbours of agent $a_i$, with cardinality $n_i$.
\begin{lemma}\label{average_inside}
Consider agent $a_i$ in MAS  (\ref{MAS}) with $|z_i\big(t^k_i\big)|\leq \alpha$. Then, under Protocol $\Romannum{1}$, the following holds.\\
$1$.~$x_i\big(t^{k+1}_i\big) = \dfrac{ \sum_{ j\in S_i} x_j\big(t^k_i\big) }{n_i}$\\
$2$.~$|z_i(t)| \leq \alpha,\hspace*{0.351cm}\forall t\geq t^k_i$
\end{lemma}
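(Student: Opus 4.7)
Both claims reduce to a short direct computation on the single update interval $[t^k_i, t^{k+1}_i]$, with the second then extended to all future times by induction.

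\emph{Claim 1.} Since $|z_i(t^k_i)| \leq \alpha$, on $[t^k_i, t^{k+1}_i)$ the rule (\ref{control}) makes agent $a_i$ apply the constant input $u^{\ast}_i \equiv -(\beta/\alpha)\,z_i(t^k_i)$, and (\ref{update_instant}) makes the interval have length $\alpha/(\beta n_i)$. Integrating $\dot{x}_i = u^{\ast}_i$ gives $x_i(t^{k+1}_i) = x_i(t^k_i) - z_i(t^k_i)/n_i$. Substituting $z_i(t^k_i) = n_i\, x_i(t^k_i) - \sum_{j \in S_i} x_j(t^k_i)$, which is just (\ref{local_dis}) rearranged, yields the stated average.

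\emph{Claim 2.} The plan is to show first that $|z_i(t)| \leq \alpha$ on $[t^k_i, t^{k+1}_i]$, then iterate. Plugging the constant $u^{\ast}_i$ from above into (\ref{z_trajectory}) gives, for $t$ in this interval, the closed form
\[
z_i(t) = \Big(1 - \tfrac{n_i \beta}{\alpha}(t - t^k_i)\Big) z_i(t^k_i) \;-\; \sum_{j \in S_i} \int_{t^k_i}^{t} u_j(\tau)\,d\tau .
\]
On $[t^k_i, t^{k+1}_i]$ the factor $\big(1 - (n_i\beta/\alpha)(t - t^k_i)\big)$ lies in $[0,1]$ (this is exactly why $t^{k+1}_i$ was set to $t^k_i + \alpha/(\beta n_i)$). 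Applying the triangle inequality with the assumption $|z_i(t^k_i)| \leq \alpha$ and the input bound $|u_j| \leq \beta$ collapses the bound to
\[
|z_i(t)| \;\leq\; \alpha\Big(1 - \tfrac{n_i \beta}{\alpha}(t - t^k_i)\Big) + n_i \beta (t - t^k_i) \;=\; \alpha .
\]
In particular $|z_i(t^{k+1}_i)| \leq \alpha$, so the hypothesis of the lemma is inherited at the next update instant, and the same argument applies verbatim on $[t^{k+1}_i, t^{k+2}_i]$, and so on, yielding the claim for all $t \geq t^k_i$.

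The only mild subtlety is the nonnegativity of the coefficient of $z_i(t^k_i)$ across the entire update interval, which is what allows the triangle-inequality bound to telescope exactly to $\alpha$; any interval longer than $\alpha/(\beta n_i)$ would let that coefficient turn negative and destroy the cancellation, which is a reassuring a-posteriori consistency check with the choice (\ref{update_instant}). No other step is subtle.
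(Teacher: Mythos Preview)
Your proof is correct. Claim~1 is handled identically to the paper. For Claim~2, you take a cleaner route than the paper does: you write $z_i(t)$ in closed form, observe that the coefficient $1 - (n_i\beta/\alpha)(t-t^k_i)$ stays in $[0,1]$ on the update interval, and apply the triangle inequality to get the exact bound $\alpha$ at every intermediate time. The paper instead first bounds only the endpoint $|z_i(t^{k+1}_i)|\leq\alpha$ (using the same integral estimates you use) and then handles intermediate times by contradiction: assuming $|z_i(\hat t)|>\alpha$ for some $\hat t$, it constructs neighbor inputs $u_j \equiv u^\ast_i$ on $[\hat t,t^{k+1}_i)$ that freeze $z_i$ and force $|z_i(t^{k+1}_i)|>\alpha$. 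Your direct pointwise bound avoids both the contradiction and the construction, and makes transparent exactly why the interval length $\alpha/(\beta n_i)$ is the right one, which you also note. The paper's argument, on the other hand, foreshadows the worst-case-neighbor reasoning used later in the max--min analysis of Lemma~\ref{opt_inside}, so it fits the paper's narrative slightly better even if it is less economical here.
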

\begin{proof}

At instant $t>t^k_i$, the state $x_i$ of agent $a_i$ is $x_i(t) = x_i(t^k_i) + \int_{t^k_i}^{t} u_i(\tau) d \tau$.
It is given that $|z_i(t^k_i)| \leq \alpha$. Then, control rule (\ref{update_instant})-(\ref{control}) in Protocol $\Romannum{1}$ lead to
\begin{equation*}
 x_i\big(t^{k+1}_i\big)=x_i\big(t^k_i\big) + \!\! \int_{t^k_i}^{t^{k}_i + \dfrac{\alpha}{\beta n_i} } \frac{-z_i(t^k_i)}{\alpha}\beta d\tau = x_i\big(t^k_i\big) -\dfrac{ z_i\big(t^k_i\big)}{n_i}
 \end{equation*}
Then, it follows from the definition of $z_i$ in (\ref{local_dis}) that
\begin{equation}\label{x_average_2}
x_i\big(t^{k+1}_i\big) = x_i\big(t^k_i\big) - \dfrac{ \sum_{j\in S_i}\Big(x_i\big(t^k_i\big)-x_j\big(t^k_i\big)\Big) }{n_i} 
\end{equation}
Recall that the cardinality of $S_i$ is $n_i$. Then, by re-arranging the terms in (\ref{x_average_2}), we get
$x_i\big(t^{k+1}_i\big) = \dfrac{ \sum_{ j\in S_i} x_j\big(t^k_i\big) }{ n_i}$.
 This proves the claim in Lemma \ref{average_inside}.$1$.

Next, we prove that under Protocol $\Romannum{1}$, $z_i$ satisfies $|z_i(t)| \leq \alpha,~\forall t\in [t^k_i,t^{k+1}_i]$.
Then, the claim in Lemma \ref{average_inside}.$2$ follows by repeating arguments in the subsequent inter-update intervals.

Let the control input $u^{\ast}_i$ be as defined in (\ref{control}). We know from the dynamics of $x_i$ in (\ref{MAS}) and  the definition of $z_i$ in (\ref{local_dis}) that
 \begin{equation}\label{z_in_proof_1_in}
  z_i\big(t^{k+1}_i\big)=z_i\big(t^k_i\big) + n_i \int_{t^k_i}^{t^{k+1}_i}\!\!\!\!\!\!\!\!u^{\ast}_i(t)~ dt - \sum_{j\in S_i} \Bigg ( \int_{t^k_i}^{t^{k+1}_i}\!\!\!\!\!\!\!\!u_j(t)~ dt \Bigg )
 \end{equation}
By using (\ref{update_instant}) and (\ref{control}), it is easy to show that
 \begin{equation}\label{z_in_proof_2}
  z_i\big(t^k_i\big) + n_i \int_{t^k_i}^{t^{k+1}_i}\!\!\!\!\!\!\!\!u^{\ast}_i(t)~ dt = 0
 \end{equation}
Further, recall that $|u_j(t)| \leq \beta,~\forall t\geq 0,~\forall j\in P$. Then, it follows from (\ref{update_instant}) that
\begin{equation}\label{z_in_proof_3}
 \Bigg|\int_{t^k_i}^{t^{k+1}_i}\!\!\!\!\!\!\!\!u_j(t)~dt\Bigg| = \Bigg|\int_{t^k_i}^{t^k_i + \dfrac{\alpha}{\beta n_i} }u_j(t)~dt\Bigg| \leq \frac{\alpha}{n_i} 
\end{equation}
Then, (\ref{z_in_proof_1_in}), (\ref{z_in_proof_2}) and (\ref{z_in_proof_3}) together lead to
\begin{equation}\label{z_at_t_k_1}
\big|z_i\big(t^{k+1}_i\big)\big|\leq \alpha
\end{equation}

Now, we claim that $|z_i(t)|\leq \alpha,~\forall t\in [t^k_i,t^{k+1}_i]$.
Assume for the sake of contradiction that there exists $\hat{t}\in (t^k_i,t^{k+1}_i)$ such that
$|z_i\big(\hat{t}\hspace*{0.04cm}\big)|> \alpha$. Take $u_j(t)=u^{\ast}_i(t),~\forall t \in \big[\hat{t},t^{k+1}_i\big),~\forall j\in S_i$.
Then, it follows from (\ref{z_in_proof_1_in}) that $|z_i\big(t^{k+1}_i\big)|=|z_i\big(\hat{t}\hspace*{0.04cm}\big)| > \alpha$, which contradicts (\ref{z_at_t_k_1})
and proves our claim that
 $|z_i(t)|\leq \alpha, ~\forall t\in [t^k_i,t^{k+1}_i]$.
 By repeating the above arguments in the subsequent inter-update intervals of agent $a_i$, we get Lemma \ref{average_inside}.$2$.
This completes the proof.
\end{proof}

Let $t^k_i$ be an update instant of agent $a_i$ in MAS (\ref{MAS}).
Define 
\begin{align}
x_i^{min}\big(t^k_i\big)&:= \min_{j\in (S_i \cup \hspace*{0.03cm}i) } ~x_j\big(t^k_i\big) \label{x_i_min_def}\\ x_i^{max}\big(t^k_i\big)&:= \max_{j\in (S_i \cup \hspace*{0.03cm}i)} ~x_j\big(t^k_i\big) \label{x_i_max_def}
\end{align}
The following lemma shows that the state $x_i$ remains bounded between $x_i^{min}\big(t^k_i\big)$ and $x_i^{max}\big(t^k_i\big)$, in the interval $[t^k_i,t^{k+1}_i]$.

\begin{lemma}\label{inside_one_interval}
Consider any agent $a_i$ in MAS (\ref{MAS}). Under Protocol $\Romannum{1}$, the state of agent $a_i$ satisfies the following.
 \begin{equation}\label{x_i_t_between_min_max}
 x_i^{min}\big(t^k_i\big) \leq x_i(t) \leq x_i^{max}\big(t^k_i\big),~~~~~\forall t\in \big[t^k_i,t^{k+1}_i\big]
\end{equation}
\end{lemma}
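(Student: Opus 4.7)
The plan is to split the argument into two cases based on whether $|z_i(t^k_i)| \leq \alpha$ or $|z_i(t^k_i)| > \alpha$, corresponding to the two branches of Protocol $\Romannum{1}$. In both cases the control is piecewise constant on $[t^k_i, t^{k+1}_i)$, so $x_i$ is monotone on this interval, and it suffices to verify that the two endpoints lie in $\big[x_i^{min}(t^k_i), x_i^{max}(t^k_i)\big]$.

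In the first case ($|z_i(t^k_i)| \leq \alpha$), I would invoke Lemma~\ref{average_inside}.$1$, which yields $x_i(t^{k+1}_i) = \frac{1}{n_i}\sum_{j \in S_i} x_j(t^k_i)$. This is a convex combination of the neighbours' states at $t^k_i$, hence lies between $\min_{j\in S_i} x_j(t^k_i)$ and $\max_{j\in S_i} x_j(t^k_i)$, which in turn lie in $\big[x_i^{min}(t^k_i), x_i^{max}(t^k_i)\big]$ by the definitions (\ref{x_i_min_def})--(\ref{x_i_max_def}). Since $x_i(t^k_i)$ itself is trivially in this interval, and $x_i$ moves linearly (constant control) between the two endpoints, the inclusion (\ref{x_i_t_between_min_max}) follows immediately.

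The second case ($|z_i(t^k_i)| > \alpha$) is where the main work lies. Assume without loss of generality $z_i(t^k_i) > \alpha$ (the other sign is symmetric). Then (\ref{control_out}) gives $u_i^\ast \equiv -\beta$, so $x_i$ is strictly decreasing on $[t^k_i, t^{k+1}_i]$; hence the upper bound $x_i(t) \leq x_i(t^k_i) \leq x_i^{max}(t^k_i)$ is trivial. For the lower bound, the total displacement is $\beta(t^{k+1}_i - t^k_i) = (|z_i(t^k_i)| + \alpha)/(2 n_i)$ by (\ref{update_instant_out}). The key observation is that from (\ref{local_dis}), $z_i(t^k_i) = n_i x_i(t^k_i) - \sum_{j\in S_i} x_j(t^k_i) \leq n_i\big(x_i(t^k_i) - x_i^{min}(t^k_i)\big)$, which (together with $z_i(t^k_i) > \alpha$) yields the chain
\begin{equation*}
\frac{|z_i(t^k_i)| + \alpha}{2 n_i} \;\leq\; \frac{|z_i(t^k_i)|}{n_i} \;\leq\; x_i(t^k_i) - x_i^{min}(t^k_i).
\end{equation*}
Hence $x_i(t^{k+1}_i) \geq x_i^{min}(t^k_i)$, and by monotonicity $x_i(t) \geq x_i^{min}(t^k_i)$ for all $t \in [t^k_i, t^{k+1}_i]$.

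The only subtle step is the estimate in the second case: one must exploit the strict inequality $|z_i(t^k_i)| > \alpha$ precisely to convert the displacement $(|z_i(t^k_i)|+\alpha)/(2n_i)$ into the cleaner $|z_i(t^k_i)|/n_i$ before invoking the $z_i$--to--$x_i^{min}$ inequality. Everything else reduces to monotonicity and convex combination arguments.
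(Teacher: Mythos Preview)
Your proof is correct and follows essentially the same strategy as the paper's: monotonicity of $x_i$ on $[t^k_i,t^{k+1}_i]$ (from the constant control) reduces the claim to checking the two endpoints, with Lemma~\ref{average_inside}.1 supplying the key endpoint value in Case~1. Two minor differences are worth noting: in Case~1 the paper first fixes the sign of $z_i(t^k_i)$ and then splits into subcases according to whether $x_i^{max}(t^k_i)$ is attained at $i$ or at a neighbour, whereas you handle both bounds at once via the convex-combination property of the average, which is cleaner; and in Case~2 the paper only asserts that the argument ``follows analogously,'' while you supply the explicit displacement estimate $(|z_i(t^k_i)|+\alpha)/(2n_i) \leq |z_i(t^k_i)|/n_i \leq x_i(t^k_i)-x_i^{min}(t^k_i)$, which is indeed the right computation.
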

\begin{proof}
 Based on the value of $z_i\big(t^k_i\big)$, there are two cases, $|z_i\big(t^k_i\big)| \leq \alpha$ and $|z_i\big(t^k_i\big)| 
 >\alpha$. We will prove the claim for the first case. The proof for the second case follows analogously.

Without loss of generality, assume that $-\alpha \leq z_i\big(t^k_i\big) \leq 0$.
Then, it follows from (\ref{control}) that $u_i(t) = \dfrac{-z_i\big(t^k_i\big)}{\alpha}\beta \geq 0,~\forall t \in [t^k_i,t^{k+1}_i)$.
Hence, $\dot{x}_i(t)=u_i(t) \geq 0,~\forall t \in [t^k_i,t^{k+1}_i)$, which implies that
 $x_i\big(t^k_i\big)  \leq x_i(t),~\forall t \in [t^k_i,t^{k+1}_i]$.
 Then, the definition of $x_i^{min}\big(t^k_i\big)$ in (\ref{x_i_min_def}) leads to
\begin{equation}\label{x_larger_inside}
 x_i^{min}\big(t^k_i\big) \leq x_i\big(t^k_i\big) \leq x_i(t),~~~~~~\forall t \in \big[t^k_i,t^{k+1}_i\big]
\end{equation}

Next, let $q \in (S_i \cup i)$ be such that $x_q\big(t^k_i\big)=x_i^{max}\big(t^k_i\big)$.
Then, based on the value of $q$, there are following two cases:\\
\text{Case 1 : } $q=i$\\
In this case, we must have $x_i\big(t^k_i\big) = x_j\big(t^k_i\big),~\forall j\in S_i$. Otherwise, we will have $z_i\big(t^k_i\big)>0$, which violates the assumption $-\alpha \leq z_i\big(t^k_i\big) \leq 0$.
Thus, $z_i\big(t^k_i\big)=0$. Then, (\ref{control}) leads to $u_i(t)=0,~\forall t \in [t^k_i,t^{k+1}_i)$ and
hence, $x_i\big(t^{k+1}_i\big)=x_i\big(t^k_i\big) = x_i^{max}\big(t^k_i\big)$.\\
\text{Case 2 : } $q \in S_i$\\
Recall from Lemma \ref{average_inside}.$1$ that
$x_i\big(t^{k+1}_i\big)$ is the average of 
$x_j\big(t^k_i\big)$'s for all $j \in S_i$.
Thus, $x_i\big(t^{k+1}_i\big) \leq \max_{j \in S_i} x_j\big(t^k_i\big) = x_i^{max}\big(t^k_i\big)$. 

In both of the above cases, we have $x_i\big(t^{k+1}_i\big) \leq x_i^{max}\big(t^k_i\big)$.
Recall that $\dot{x}_i(t) \geq 0,~\forall t \in [t^k_i, t^{k+1}]$. Thus,
\begin{equation}\label{x_smaller_inside}
 x_i(t) \leq x_i\big(t^{k+1}_i\big) \leq x^{max}_i\big(t^k_i\big),~~~~~\forall t\in \big[t^k_i,t^{k+1}_i\big]
\end{equation}
Then, the claim (\ref{x_i_t_between_min_max}) follows from (\ref{x_larger_inside}) and (\ref{x_smaller_inside}).
\end{proof}

Let $x^{min}(0)$ and $x^{max}(0)$ be as in the definition of $T^{\ast}\big(X(0)\big)$ in (\ref{min_time_expression}). The following lemma shows that the states of all agents in MAS (\ref{MAS}) remain bounded between $x^{min}(0)$ and $x^{max}(0)$, for all $t\geq 0$.
\begin{lemma}\label{inside_initial_val}
Consider MAS (\ref{MAS}) with initial condition $X(0)\in {\mathbb{R}}^n$.
Under 
Protocol $\Romannum{1}$, the following holds.
\begin{equation}\label{x_i_in_min_max_t_each}
 x^{min}(0) \leq x_i(t) \leq x^{max}(0),~~~~~\forall t\geq 0,~~~~~ \forall i \in P
\end{equation}
\end{lemma}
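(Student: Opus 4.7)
The plan is to reduce (\ref{x_i_in_min_max_t_each}) to Lemma \ref{inside_one_interval} via an induction over the pooled sequence of update instants of all agents, taken in temporal order. First, I would define the envelopes $M(t):=\max_{i\in P} x_i(t)$ and $m(t):=\min_{i\in P} x_i(t)$, so the statement to prove becomes $m(0)\leq m(t)$ and $M(t)\leq M(0)$ for every $t\geq 0$; by definition, $M(0)=x^{max}(0)$ and $m(0)=x^{min}(0)$.

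Next, I would enumerate the union of all agents' update instants in increasing order as $0=\tau_0<\tau_1<\tau_2<\cdots$, noting that at $\tau_0=0$ every agent updates simultaneously per the description of Protocol $\Romannum{1}$. The inductive claim to maintain is: $m(0)\leq m(\tau_j)$ and $M(\tau_j)\leq M(0)$ for every $j\leq k$. The base case $k=0$ is immediate. For the inductive step, take any agent $a_i$ and let $t^m_i$ be its most recent update instant satisfying $t^m_i<\tau_{k+1}$ (and $t^m_i=\tau_0=0$ if $\tau_{k+1}$ is $a_i$'s second update). Then $\tau_{k+1}\in[t^m_i,t^{m+1}_i]$ and $t^m_i$ coincides with $\tau_{j'}$ for some $j'\leq k$. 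Lemma \ref{inside_one_interval} combined with the definitions (\ref{x_i_min_def})--(\ref{x_i_max_def}) yields
\[
m(t^m_i)\leq x_i^{min}(t^m_i)\leq x_i(\tau_{k+1})\leq x_i^{max}(t^m_i)\leq M(t^m_i),
\]
and the inductive hypothesis applied at $t^m_i=\tau_{j'}$ delivers $m(0)\leq m(t^m_i)$ and $M(t^m_i)\leq M(0)$. Taking the maximum (respectively, minimum) over $i\in P$ closes the induction, giving $M(\tau_{k+1})\leq M(0)$ and $m(\tau_{k+1})\geq m(0)$.

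Finally, for an arbitrary $t\geq 0$ and any $i\in P$, I would apply the same chain of inequalities within $a_i$'s current inter-update interval $[t^m_i,t^{m+1}_i]\ni t$: since $t^m_i$ coincides with some $\tau_j$, the already-established bound at $t^m_i$ yields $x^{min}(0)\leq x_i(t)\leq x^{max}(0)$, which is exactly (\ref{x_i_in_min_max_t_each}).

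The main obstacle is the asynchrony of the protocol: because different agents update at different times, the envelopes $M(t)$ and $m(t)$ are in general non-monotone in $t$, so one cannot naively propagate a pointwise bound forward agent-by-agent. Enumerating the pooled update sequence in temporal order resolves this, since each invocation of Lemma \ref{inside_one_interval} on an interval $[t^m_i,t^{m+1}_i]$ requires the bound only at the left endpoint $t^m_i$, which is already secured by a strictly earlier stage of the induction.
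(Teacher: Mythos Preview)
Your proposal is correct and follows essentially the same approach as the paper: both reduce the claim to Lemma~\ref{inside_one_interval} applied on successive inter-update intervals, then propagate the bound $[x^{min}(0),x^{max}(0)]$ forward in time. Your pooled induction over the globally ordered update instants $\tau_0<\tau_1<\cdots$ is a more careful articulation of what the paper abbreviates as ``applying the above arguments in the subsequent inter-update intervals of all agents''; in particular, it makes explicit why the neighbours' states at each $t^m_i$ are already known to lie in $[x^{min}(0),x^{max}(0)]$, a point the paper leaves implicit.
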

\begin{proof}
 Recall that according to Protocol $\Romannum{1}$, the time instant $t^1_i=0$ is the first update instant of every agent $a_i,~i \in P$.
 Let $x_i^{min}\big(t^k_i\big)$ and $x_i^{max}\big(t^k_i\big)$ be as defined in (\ref{x_i_min_def}) and (\ref{x_i_max_def}), respectively.
 Then, it follows from Lemma \ref{inside_one_interval} that 
 \begin{equation}\label{x_between_min_and_max}
  x_i^{min}(0) \leq x_i(t) \leq x_i^{max}(0),~~~~\forall t \in [0, t^2_i],~~~\forall i \in P 
 \end{equation}
Further, it is clear from the definitions of $x_i^{min}$ and $x^{min}(0)$ that $x^{min}(0) \leq x_i^{min}(0),~\forall i \in P$.
Similarly,  $x_i^{max}(0) \leq x^{max}(0),~\forall i \in P$. Then, (\ref{x_between_min_and_max}) leads to
\begin{equation*}
  x^{min}(0) \leq x_i(t) \leq x^{max}(0),~~~~\forall t \in [0, t^2_i],\hspace*{0.5cm} \forall i\in P 
\end{equation*}
Now, by applying the above arguments in the subsequent inter-update intervals $[t^2_i,t^3_i)$, $[t^3_i,t^4_i),\dots$ of all agents, we get (\ref{x_i_in_min_max_t_each}).
This completes the proof.
\end{proof}

\subsection*{B.~Proof of Theorem \ref{consensus_time}} 
For the sake of contradiction, assume that the claim is not correct, i.e., for some $n$, $G$ and $X(0)$, Protocol $\Romannum{1}$
 does not achieve $\alpha$-consensus of MAS (\ref{MAS}) within time $2T^{\ast}\big(X(0)\big)$. 
 This implies that there exists at least one agent, say $a_i$, and a time instant $\hat{t}_i > 2T^{\ast}\big(X(0)\big)$,
 such that $|z_i(\hat{t}_i)| > \alpha$.
 Recall from Lemma \ref{feasible} that Protocol $\Romannum{1}$
 satisfies the
 necessary condition (\ref{retain_inside}).
 Then, it follows from (\ref{retain_inside}) that under Protocol $\Romannum{1}$, the following holds.
 \begin{equation*}
  z_i(t) < -\alpha,~\forall t \in [0, \hat{t}_i] \hspace*{0.6cm} or \hspace*{0.6cm} z_i(t) > \alpha,~\forall t \in [0, \hat{t}_i]
 \end{equation*}
 Without loss of generality, assume that $z_i(t) < -\alpha,~\forall t \in [0, \hat{t}_i]$.
Then, it follows from control rule (\ref{control_out}) that $u_i(t)=\beta,~\forall t\in [0,\hat{t}_i]$ and hence, 
$x_i\big(\hat{t}_i\big) = x_i(0) + \beta \hat{t}_i$.           
As $\hat{t}_i > 2T^{\ast}\big(X(0)\big)$, we get
\begin{equation}\label{x_i_larger}
x_i\big(\hat{t}_i\big) > \Big(x_i(0) + 2\beta T^{\ast}\big(X(0)\big)\Big)
\end{equation}
Let $x^{min}(0)$ and $x^{max}(0)$ be as in the definition of $T^{\ast}\big(X(0)\big)$ in (\ref{min_time_expression}).
Then, (\ref{min_time_expression}) and (\ref{x_i_larger}) lead to
\begin{equation}\label{x_greater}
 x_i\big(\hat{t}_i\big) > \Big(x_i(0) + x^{max}(0) - x^{min}(0)\Big)
\end{equation}
It is clear from the definition of $x^{min}(0)$ that
$x_i(0)\geq x^{min}(0)$.
Then, (\ref{x_greater}) leads to 
$x_i(\hat{t}_i) > x^{max}(0)$.
This contradicts Lemma \ref{inside_initial_val} and proves the claim in Theorem \ref{consensus_time}.

\subsection*{C.~Intermediate result for the proof of Theorem \ref{strict_theo_1}}
\begin{lemma}\label{x_i_close_5}
Consider the MAS in Example \ref{strict_bound} with the communication graph $G_e$ shown in Fig. \ref{graph_in_example}.
Let $r$ be the number of neighbours of agent $a_1$.
Then, for every $\mu \in {\mathbb{R}}^+$, 
there exists an integer $n=n_{\mu,r}$ such that under Protocol $\Romannum{1}$, the following holds. 
\begin{equation}\label{claim}
 5-\mu \leq x_i(t) \leq 5,~~~~~~~\forall t\geq \mu,~~~~~~~~\forall i=2,3,\dots,n_{\mu,r}
\end{equation}
\end{lemma}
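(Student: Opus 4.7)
Fix $r \geq 1$ and $\mu > 0$. The upper bound $x_i(t) \leq 5$ for every $t \geq 0$ and every $i = 2, \ldots, n$ is immediate from Lemma \ref{inside_initial_val} applied with $x^{\min}(0) = 0$ and $x^{\max}(0) = 5$. So the whole content of the proof is the lower bound $x_i(t) \geq 5 - \mu$ for $t \geq \mu$, which I establish by choosing $n$ large enough that the other $n-1$ agents form a ``crowd'' whose collective inertia pins the state of each of them near $5$.

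I exploit the symmetry of $G_e$ by partitioning $\{2, \ldots, n\}$ into the peripheral set $\mathcal{N}_1 = \{2, \ldots, r+1\}$ (the neighbours of $a_1$, each of degree $n-1$) and the interior set $\mathcal{I} = \{r+2, \ldots, n\}$ (the non-neighbours of $a_1$, each of degree $n-2$). By symmetry all agents in $\mathcal{N}_1$ have identical trajectories, and likewise for $\mathcal{I}$. Computing the local disagreements at $t = 0$ gives $z_1(0) = -5r$, $z_i(0) = 5$ for $i \in \mathcal{N}_1$, and $z_i(0) = 0$ for $i \in \mathcal{I}$. Control rules (\ref{control_out})--(\ref{control}) therefore yield $u_1(0) = +\beta$, $u_i(0) = -\beta$ for $i \in \mathcal{N}_1$, and $u_i(0) = 0$ for $i \in \mathcal{I}$, with next-update instants $t^{2}_{\mathcal{I}} = 3/(n-2)$ and $t^{2}_{\mathcal{N}_1} = 4/(n-1)$, both of order $1/n$.

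I then show that every agent $a_i$ with $i \geq 2$ enters the averaging regime $|z_i(t)| \leq \alpha$ by the time $t^{\star} := 4/(n-1)$. For $i \in \mathcal{I}$ this is immediate from $z_i(0) = 0$ and Lemma \ref{average_inside}.2. For $i \in \mathcal{N}_1$ a short closed-form calculation of the states over $[0, t^{\star}]$ using the constant inputs just listed gives $z_i(t^{\star}) = 1 + O(1/n)$, which for $n$ large is strictly below $\alpha = 3$, so Lemma \ref{average_inside}.2 again applies from $t^{\star}$ onwards. Hence, for $t \geq t^{\star}$, every update of every $a_i,~i \geq 2$, reassigns $x_i$ to the average of its neighbours (Lemma \ref{average_inside}.1), and by Lemma \ref{inside_one_interval} the state stays inside $[x_i^{\min}(t^k_i), x_i^{\max}(t^k_i)]$ on each inter-update interval.

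Let $M(t) := \max_{i \geq 2}(5 - x_i(t))$. Each interior update averages over $n-2$ neighbours, of which only $r$ lie in $\mathcal{N}_1$, so any deviation concentrated in $\mathcal{N}_1$ is diluted by a factor $r/(n-2)$ at every interior update; between updates, the magnitude-bounded input can shift a state by at most $\beta \cdot \alpha/(\beta(n-2)) = \alpha/(n-2)$. Iterating these two estimates together with the averaging identity for peripheral agents yields a uniform bound $M(t) \leq C(r)/n$ for $t \geq t^{\star}$, with $C(r)$ depending only on $r$. I then pick $n_{\mu, r}$ large enough that both $t^{\star} \leq \mu$ and $C(r)/n_{\mu, r} \leq \mu$, which delivers (\ref{claim}). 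The principal technical hurdle is the last step: since peripheral and interior agents follow different update schedules and have different neighbourhood compositions, the recursion on $M(t)$ must be propagated through many asynchronous events. The symmetric structure of $G_e$---every peripheral agent is indistinguishable, and every interior agent is indistinguishable---is what keeps the bookkeeping finite and makes the $1/n$ dilution dominate the $1/n$ drift.
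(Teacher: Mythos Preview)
Your proposal is correct and follows essentially the same strategy as the paper: partition $\{2,\ldots,n\}$ into the peripheral agents $\mathcal{N}_1$ and interior agents $\mathcal{I}$, compute the initial disagreements and controls, show that every agent $i\geq 2$ enters the averaging regime $|z_i|\leq\alpha$ by time $O(1/n)$, and then use Lemma~\ref{average_inside} together with the fact that only $r$ of the $n-2$ (or $n-1$) neighbours are ever perturbed away from $5$ to propagate a bound $5-x_i(t)=O_r(1/n)$. The paper carries out the first two or three update rounds by explicit calculation (giving a concrete formula for $n_{\mu,r}$) and then appeals to ``repeating the arguments'', whereas you package the same induction more abstractly via the quantity $M(t)=\max_{i\geq 2}(5-x_i(t))$; both leave the asynchronous bookkeeping of the inductive step at a comparable level of detail.

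One small imprecision: when you compute $z_i(t^{\star})$ for $i\in\mathcal{N}_1$ ``using the constant inputs just listed'', note that the interior agents have already updated once at $t^2_{\mathcal{I}}=3/(n-2)<t^{\star}$, so their input is no longer zero on $[t^2_{\mathcal{I}},t^{\star}]$. The correction is only $O(1/n^2)$ and does not disturb your conclusion $z_i(t^{\star})=1+O(1/n)$, but you should flag it. The paper sidesteps this by evaluating $z_l$ at the earlier instant $t^2_{r+2}$ rather than at $t^{\star}$.
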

\begin{proof}
We first define the integer $n_{\mu,r}$ for which the claim (\ref{claim}) holds.
Let $\mu \in {\mathbb{R}}^+$ be the given number. Define $\widetilde{\mu}:=\dfrac{\mu}{2}$. 
Let \textit{ceil} be the standard ceiling function. Define the integers 
$n_{{\mu}_1}:=1+ ceil\bigg(\dfrac{4}{\widetilde{\mu}}\bigg)$,
$n_{{\mu}_2}:= ceil\bigg(\dfrac{5}{\mu-\widetilde{\mu}}\bigg)$,
$n_{r_1}:=1+\bigg(\dfrac{16r}{5r+3}\bigg)$ and
$n_{r_2}:=3r-7$.
Then, define
\begin{equation}\label{define_n_u_r}
n_{{\mu},r}:=\max\big\{8, n_{{\mu}_1}, n_{{\mu}_2}, n_{r_1}, n_{r_2}\big\}  + 1
\end{equation}
Now, consider the MAS in Example \ref{strict_bound} with $n=n_{{\mu},r}$.
 We will show that the claim (\ref{claim}) holds for this MAS.
For the sake of clarity, we divide the remaining proof into five parts as follows.\\
\textit{$1)$ Computation of $t^2_i$ and $x_i\big(t^2_i\big)$ for $i\geq 1$}

According to Protocol $\Romannum{1}$, the first update instant of every agent is $t^1_i=0$. At this instant, the states of agents are $x_1(0)=0$ and $x_i(0)=5,~\forall i=2,\dots,n_{{\mu},r}$. Then,
the local disagreement of agent $a_1$ at instant $t=0$ is $z_1(0)=-5r$. Note that $r\geq 1$. Thus, $z_1(0) <-3=-\alpha$.
Then, it follows from control rule (\ref{update_instant_out})-(\ref{control_out}) that
 \begin{align}
 t^{2}_1=~\!\!&\dfrac{|z_1(0)|+\alpha}{2\beta n_1} = \dfrac{5r+3}{2r} \label{agent_1_next} \\ u^{\ast}_1(t)=~\!\!&\beta=1,~~~~~~~\forall t \in \big[0,t^2_1\big) \label{agent_1_control}
 \end{align}
 As a result,
$x_1\big(t^2_1\big) = x_1(0) + t^2_1 =  \dfrac{5r+3}{2r}$.

Now, consider any agent $a_l,~l=2,\dots,r+1$. Note that $n_l=n_{{\mu},r}-1$. Thus, the local disagreement of agent $a_l$ at instant $t=0$ is
$z_l(0)=(5-0) + (n_{{\mu},r}-2)(5-5) = 5>\alpha$.
Then, it follows from control rule  (\ref{update_instant_out})-(\ref{control_out}) that
\begin{align}
 t^{2}_l=&\dfrac{|z_l(0)|+\alpha}{2\beta n_l}=\dfrac{4}{n_{{\mu},r}-1} \label{agent_2_1_next}\\
 u^{\ast}_l(t)=&-\beta=-1,~~~~~\forall t \in \big[0,t^2_l\big)\label{agent_2_1_control}
\end{align}
As a result, $x_l\big(t^2_l\big) = x_l(0) -t^2_l = 5-\dfrac{4}{n_{{\mu},r}-1}$.
It follows from the definitions of $n_{{\mu}_1}$ and $n_{{\mu},r}$ in (\ref{define_n_u_r}) that $\dfrac{4}{n_{{\mu},r}-1} < \widetilde{\mu}$. Thus,  $x_l\big(t^2_l\big)$ satisfies
$(5- \widetilde{\mu})< x_l\big(t^2_l\big)<5$.
Recall from (\ref{agent_2_1_control}) that $\dot{x}_l(t) < 0,~\forall t \in [0,t^2_l)$. Recall also that $x_l(0)=5$.
Hence,
\begin{equation}\label{x_l_less_u}
 (5- \widetilde{\mu})< x_l(t)<5,~~ \forall t \in \big[0,t^2_l\big],~~ \forall l=2,\dots,r+1
 \end{equation}

Next, consider any agent $a_j,~j=r+2,\dots,n_{{\mu},r}$. Note that $n_j=n_{{\mu},r}-2$.
Thus, the local disagreement of agent $a_j$  at instant $t=0$ is $z_j(0)= (5-5)(n_{{\mu},r}-2) = 0$.
Then, it follows from control rule (\ref{update_instant})-(\ref{control}) that
 \begin{align}
 t^{2}_j = ~\!\! &\dfrac{\alpha}{\beta n_j} = \dfrac{3}{n_{{\mu},r}-2} \label{agent_r2_next} \\ u^{\ast}_j(t) = ~\!\!&0,~~~~~~\forall t \in \big[0,t^2_j\big)  \label{agent_r2_control}
 \end{align}
 As a result, the state $x_j$ satisfies
 \begin{equation}\label{x_j_second}
 x_j(t) = x_j(0) =5,~~~~\forall t \in \big[0,t^2_j\big],~~~~\forall j=r+2,\dots,n_{{\mu},r}
 \end{equation}

\noindent \textit{$2$) Ordering of update instants $t^2_i$'s}

Let $t^2_i$'s be as given in (\ref{agent_2_1_next}) and (\ref{agent_r2_next}).
Recall from (\ref{define_n_u_r}) that $n_{{\mu},r} > 8$.
Then, it is easy to show that $\dfrac{3}{n_{{\mu},r}-2} < \dfrac{4}{n_{{\mu},r}-1}$. Then, it follows from (\ref{agent_2_1_next}) and (\ref{agent_r2_next}) that 
\begin{equation}\label{t_2_order}
    t^2_{r+2}=t^2_{r+3}=\dots=t^2_{n_{{\mu},r}} < t^2_2=t^2_3=\dots=t^2_{r+1}
\end{equation}

Let $t^2_1$ be as given in (\ref{agent_1_next}).
It follows from the definitions of $n_{r_1}$ and $n_{{\mu},r}$ in (\ref{define_n_u_r})
that
$\dfrac{4}{n_{{\mu},r}-1} < \dfrac{5r+3}{2r}$. Then, (\ref{agent_1_next}) and (\ref{agent_2_1_next})  lead to
\begin{equation}\label{t_2_1_2}
t^2_2=t^2_3=\dots=t^2_{r+1} < t^2_1
\end{equation}

\noindent \textit{$3$) Computation of $t^3_i$ for $i \geq 2$}

Consider any agent $a_l,~l=2,\dots,r+1$.
 Recall from (\ref{t_2_order}) and (\ref{t_2_1_2}) that $t^2_{r+2} < t^2_l  < t^2_1$.
 Now, we compute $z_l\big(t^2_{r+2}\big)$. 
 According to (\ref{agent_1_control}) and (\ref{agent_r2_next}), the state of agent $a_1$ at instant $t^2_{r+2}$ is
 $x_1\big(t^2_{r+2}\big) = \dfrac{3}{n_{{\mu},r}-2}$.
 Similarly, according to (\ref{agent_2_1_control}) and (\ref{agent_r2_next}), the state of agent $a_l,~l=2,\dots,r+1$, at instant $t^2_{r+2}$ is $x_l\big(t^2_{r+2}\big) = 5-\dfrac{3}{n_{{\mu},r}-2}$.  Recall from (\ref{x_j_second}) that $x_j\big(t^2_{r+2}\big) = 5,~\forall j=r+2,\dots,n_{{\mu},r}$. Thus, for $l=2,\dots,r+1$, we get
 \begin{equation}\label{z_t_r_2}
 z_l\big(t^2_{r+2}\big) = \!\!\!
 \sum_{k=1, k \neq l}^{n_{{\mu},r}} \!\!\!\Big(x_l\big(t^2_{r+2}\big) -  x_k\big(t^2_{r+2}\big)\Big)
\! =5-\dfrac{3(n_{{\mu},r}-r+1)}{n_{{\mu},r}-2}
 \end{equation}
Then, it follows from definitions of $n_{r_2}$ and $n_{{\mu},r}$ in (\ref{define_n_u_r}) that 
 $z_l\big(t^2_{r+2}\big) \in \![2,3] \!\subset \![-\alpha,\alpha]$. \!After that,
Lemmas \ref{necessary_condition} and \ref{feasible} give \begin{equation}\label{z_2_2_2}
 |z_l(t)| \leq \alpha,~~\forall t \geq t^2_{r+2},~~ \forall l=2,\dots,r+1
 \end{equation}
 Recall from (\ref{t_2_order}) that $t^2_{r+2} <t^2_l, \forall l=2,\dots,r+1$. Hence,
\begin{equation}\label{z_l_2_less}
 |z_l\big(t^2_l\big)| \leq \alpha,~~~ \forall l=2,\dots,r+1
 \end{equation}
 Then, (\ref{update_instant}) and (\ref{agent_2_1_next}) together lead to
 \begin{equation}\label{t_3_2}
  t^3_l = \bigg(t^2_l + \dfrac{\alpha}{\beta n_l}\bigg) =  \dfrac{7}{n_{{\mu},r}-1},~\forall l=2,\dots,r+1
  \end{equation}
 
Next, consider any agent $a_j,~j=r+2,\dots,n_{{\mu},r}$.
Recall that $z_j(0)=0$. Then, it follows from Lemma \ref{average_inside}.$2$ that
\begin{equation}\label{z_r2}
|z_j(t)| \leq \alpha,~~\forall t\geq 0,~~\forall j=r+2,\dots,n_{{\mu},r}
\end{equation}
 Subsequently, (\ref{update_instant}) and (\ref{agent_r2_next}) together lead to
 \begin{equation}\label{t_3_r2}
 t^{3}_j=t^2_j+\dfrac{\alpha}{\beta n_j}=\dfrac{6}{n_{{\mu},r}-2},~~~~\forall j=r+2,\dots,n_{{\mu},r}
 \end{equation}
 Recall from (\ref{define_n_u_r}) that $n_{{\mu},r} > 8$. Then, it is easy to show that
 $\dfrac{4}{n_{{\mu},r}-1}<\dfrac{6}{n_{{\mu},r}-2}< \dfrac{7}{n_{{\mu},r}-1}$. Then, (\ref{agent_2_1_next}), (\ref{t_3_2})  and (\ref{t_3_r2}) imply that 
 \begin{equation}\label{t_3_2_r2}
 t^2_{r+1}<t^{3}_{r+2}=\dots=t^3_{n_{{\mu},r}}< t^3_2=t^3_3=\dots=t^3_{r+1}
\end{equation} 
 
 Let $t^2_1$ be as given in (\ref{agent_1_next}). It follows from the definitions of $n_{r_1}$ and $n_{{\mu},r}$ in (\ref{define_n_u_r}) that $\dfrac{7}{n_{{\mu},r}-1} < \dfrac{5r+3}{2r}$.
  Then, (\ref{agent_1_next}) and (\ref{t_3_2})  together lead to
  \begin{equation}\label{t_3_2_1}
 t^3_2=t^3_3=\dots=t^3_{r+1} < t^2_1
  \end{equation}
  
 \noindent \textit{$4$) Computation of $x_j\big(t^3_j\big)$ for $j\geq r+2$}

Consider any agent $a_j,~j= r+2,\dots,n_{{\mu},r}$. Recall that $|z_j(t)| \leq \alpha,\forall t \geq 0$ and $n_j=n_{{\mu},r}-2$. Then,  Lemma \ref{average_inside}.$1$ leads to
\begin{equation*}
x_j\big(t^3_j\big) = \dfrac{\sum_{k \in S_j} ~x_k\big(t^2_j\big) }{n_{{\mu},r}-2} = \dfrac{\sum_{k=2,k\neq j}^{n_{{\mu},r}} ~x_k\big(t^2_j\big) }{n_{{\mu},r}-2} 
\end{equation*}
It follows from (\ref{x_l_less_u}), (\ref{x_j_second}) and (\ref{t_2_order}) that
$(5-\widetilde{\mu}) < x_k\big(t^2_j\big) \leq 5,~\forall k=2,\dots,n_{{\mu},r}$. Hence, the average of $x_k\big(t^2_j\big)$'s, i.e., $x_j\big(t^3_j\big)$, satisfies
$(5-\widetilde{\mu}) < x_j\big(t^3_j\big) \leq 5$.
Recall that $\dot{x}_j(t)=u_j(t)$ is constant in the interval $[t^2_j,t^3_j]$. Recall also that $\widetilde{\mu}=\dfrac{\mu}{2}$. Hence,
\begin{equation}\label{x_r2_3}
5- \mu < 5-\widetilde{\mu} < x_j(t) \leq 5,~\forall t \in \big[t^2_j,t^3_j\big],~\forall j=r+2,\dots,n_{{\mu},r}
\end{equation}

\noindent \textit{$5$) Computation of $x_l\big(t^3_l\big)$ for $l=2,\dots,r+1$}

Consider any agent $a_l,~l=2,\dots,r+1$.
Recall from (\ref{z_l_2_less}) that
  $|z_l\big(t^2_l\big)| \leq \alpha$. Recall also that $n_l=n_{{\mu},r}-1$. Then, Lemma \ref{average_inside}.$1$ leads to,
\begin{equation}\label{x_j_t_2_2}
x_l\big(t^3_l\big) = \dfrac{x_1\big(t^2_l\big)}{n_{{\mu},r}-1} + \dfrac{\sum_{k=2,k\neq l}^{n_{{\mu},r}} x_k\big(t^2_l\big) }{n_{{\mu},r}-1} 
\end{equation}
Recall from (\ref{x_l_less_u}) that $(5-\widetilde{\mu}) < x_k\big(t^2_l\big) < 5,~\forall k=2,\dots, r+1$. Further, it follows from (\ref{t_2_order}), (\ref{t_3_2_r2}) and (\ref{x_r2_3}) that
$(5-\widetilde{\mu}) < x_k\big(t^2_l\big) \leq 5,~\forall k=r+2,\dots,n_{{\mu},r}$.
Then, the definitions of $n_{{\mu}_2}$, $n_{{\mu}, r}$ and $\widetilde{\mu}$ lead to 
$(5-\mu)< x_l\big(t^3_l\big) \leq 5$ for $l=2,\dots,r+1$.
Recall that $\dot{x}_l(t)=u_l(t)$ is constant in the interval $[t^2_l,t^3_l\big]$. Hence, 
\begin{equation}\label{x_2_3}
5-\mu < x_l(t) \leq 5,~~\forall t \in \big[t^2_l,t^3_l\big],~~\forall l=2,\dots,r+1
\end{equation}

Now, by repeating the arguments which lead to (\ref{x_r2_3}) and (\ref{x_2_3}), we can show that
\begin{equation*}\label{x_all}
5-\mu < x_i(t) \leq 5,~~\forall t \geq t^2_{r+2}=\dfrac{3}{n_{{\mu}_r}-2},~~\forall i=2,\dots,n_{{\mu},r}
\end{equation*}
It follows from the definitions of $n_{{\mu}_1}$ and $n_{{\mu},r}$ in (\ref{define_n_u_r}) that $\mu \geq t^2_{r+2}$.
Hence, $5-\mu < x_i(t) \leq 5,~\forall t \geq \mu$ for $i=2,\dots,n_{{\mu},r}
$.
This proves the claim (\ref{claim}) and completes the proof.
\end{proof}

\subsection*{D.~Proof of Theorem \ref{strict_theo_1}}

Let $\epsilon>0$ be the given real number. Define $\widehat{\mu}:=\dfrac{\epsilon}{2}$ and 
$\widehat{r}:=\max\bigg\{ceil\bigg(\dfrac{2\alpha}{\epsilon}\bigg), ceil\bigg(\dfrac{\alpha}{5-\epsilon}\bigg) \bigg\}$.
Let $n_{\widehat{\mu},\widehat{r}}$ be as
defined in (\ref{define_n_u_r}) corresponding
to $\widehat{\mu}$ and $\widehat{r}$. Recall from (\ref{z_2_2_2}) that $|z_l(t)| \leq \alpha,~\forall t \geq t^2_{r+2},~\forall l=2,\dots,r+1$, where
the expression of $t^2_{r+2}$ is given in (\ref{agent_r2_next}). Then, it follows from (\ref{agent_r2_next}), the definitions of $n_{{\mu}_1}$ and $n_{{\mu},r}$ in (\ref{define_n_u_r}) and the definition of $n_{\widehat{\mu},\widehat{r}}$ that $\widehat{\mu} \geq t^2_{r+2}$. Thus,
\begin{equation}\label{z_inside_2_mu}
|z_i(t)| \leq \alpha,~~\forall t \geq \widehat{\mu},~~\forall i=2,\dots,r+1
\end{equation}
Further, recall from (\ref{z_r2}) that 
\begin{equation}\label{z_inside_r2_mu}
|z_i(t)| \leq \alpha,~~\forall t \geq 0,~~\forall i=r+2,\dots,n_{\widehat{\mu},\widehat{r}}
\end{equation}
Then, (\ref{z_inside_2_mu}) and (\ref{z_inside_r2_mu}) imply that the $\alpha$-consensus time is equal to the maximum of ${\widehat{\mu}}$ and the time required to steer $z_1$ inside $[-\alpha,\alpha]$. 

Let $\widehat{t}_1$ be the first time instant at which $|z_1\big(\hspace*{0.03cm}\widehat{t}_1\big)|=\alpha$. 
Recall that the initial conditions of the agents in Example \ref{strict_bound} are $x_1(0)=0$ and $x_i(0)=5,~\forall i=2,\dots,n_{\widehat{\mu},\widehat{r}}$.
Thus,
$z_1(0)=-5r<-\alpha=-3$.
Then, clearly, $z_1\big(\hspace*{0.03cm}\widehat{t}_1\big)  = -\alpha$. We know from Lemma \ref{x_i_close_5} that  
$x_i(t) \geq (5-\widehat{u}),~\forall t\geq \widehat{\mu},~\forall i=2,\dots,n_{\widehat{\mu},\widehat{r}}$.
Then, $z_1\big(\hspace*{0.03cm}\widehat{t}_1\big)  = -\alpha$ implies that 
 $x_1\big(\hspace*{0.03cm}\widehat{t}_1\big) \geq \bigg(5-\widehat{\mu}-\dfrac{\alpha}{\widehat{r}}\bigg)$. Further,
 the control rule (\ref{control_out})
 and $z_1\big(\hspace*{0.03cm}\widehat{t}_1\big)  = -\alpha$ together imply that
$\dot{x}_1(t) =\beta=1,~\forall t \in \big[0,\widehat{t}_1\big)$.
Recall that $x_1(0)=0$. Hence, $\widehat{t}_1 \geq \bigg(5-\widehat{\mu}-\dfrac{\alpha}{\widehat{r}}\bigg)$. Then, it follows from the definitions of $\widehat{\mu}$ and $\widehat{r}$ that 
$\widehat{t}_1 \geq (5-\epsilon)$ and $\widehat{t}_1 \geq \widehat{\mu}$. Subsequently, (\ref{z_inside_2_mu}) and (\ref{z_inside_r2_mu}) imply that the $\alpha$-consensus time $T\big(\alpha,X(0)\big)$ satisfies
$T\big(\alpha,X(0)\big)  \geq (5-\epsilon) = \Big(2T^{\ast}\big(X(0)\big)-\epsilon\Big)$.
This proves the first inequality in
(\ref{bound_T_alpha}).

The second inequality in (\ref{bound_T_alpha}) follows from Theorem \ref{consensus_time}. This completes the proof.

\subsection*{E.~Proof of Lemma \ref{opt_inside}}
It is given that $|z_i(t^k_i)| \leq \alpha$. Then, it follows from Lemma \ref{average_inside}.$2$ that control rule (\ref{update_instant})-(\ref{control}) satisfies
constraint (\ref{constraint}).
Let $t^{k+1}_i$ and $u^{\ast}_i$ be as defined in (\ref{update_instant})
and (\ref{control}), respectively.
 Then, the inter-update
duration $\big(t^{k+1}_i-t^k_i\big)$ is $\dfrac{\alpha}{\beta n_i}$, for all $k\geq 1$.
Next, we show that $\dfrac{\alpha}{\beta n_i}$ is the max-min value of 
optimization (\ref{max_min_con})-(\ref{constraint}). 
 This will prove the optimality of control rule (\ref{update_instant})-(\ref{control}) and complete the proof.
 
 For the sake of contradiction, assume that the max-min
 value of optimization (\ref{max_min_con})-(\ref{constraint}) is greater than
 $\dfrac{\alpha}{\beta n_i}$.
 Let that value be $T_i:=\big(t^{k+1}_i - t^k_i\big) > \dfrac{\alpha}{\beta n_i}$. Let $\widetilde{u}_i\in \mathcal{U}$ be the corresponding  optimal control.
 Then, it follows from the evolution of $z_i$ given in (\ref{z_trajectory}) that
 \begin{equation}\label{z_in_proof_4}
  z_i\big(t^{k+1}_i\big)=z_i(t^k_i) + n_i \int_{t^k_i}^{t^k_i+T_i}\!\!\!\!\!\!\widetilde{u}_i(t)~ dt - \sum_{j\in S_i} \Bigg ( \int_{t^k_i}^{t^k_i+T_i}\!\!\!\!\!\!\!u_j(t)~ dt \Bigg )
 \end{equation}
 Define $\widetilde{z}_i\big(t^{k+1}_i\big):= z_i\big(t^k_i\big) + n_i \int_{t^k_i}^{t^k_i+T_i}\widetilde{u}_i(t)~ dt$.
 Then, depending on the value of $\widetilde{z}_i\big(t^{k+1}_i\big)$, there are following three cases.\\
 \text{Case 1 : } \big|$\widetilde{z}_i\big(t^{k+1}_i\big)\big| > \alpha$\\
 Take $u_j(t)=0,~\forall t\in [t^k_i,t^k_i+T_i],~\forall j\in S_i$. Then, it follows from (\ref{z_in_proof_4}) that
 $\big|z_i\big(t^{k+1}_i\big)\big| > \alpha$.\\
 \text{Case 2 : } $0\leq \widetilde{z}_i\big(t^{k+1}_i\big) \leq \alpha$\\
 Take $u_j(t)=-\beta,~\forall t\in [t^k_i,t^k_i+T_i],~\forall j\in S_i$. Then, it follows from (\ref{z_in_proof_4}) that
 \begin{equation*}
 z_i\big(t^{k+1}_i\big) = \widetilde{z}_i\big(t^{k+1}_i\big) + \sum_{j\in S_i}\beta\big(t^k_i+T_i-t^k_i\big) = \widetilde{z}_i\big(t^{k+1}_i\big) + \beta n_i T_i
 \end{equation*}
 Recall that $T_i > \dfrac{\alpha}{\beta n_i}$, which is equivalent to $\beta n_i T_i > \alpha$. Then, the fact $\widetilde{z}_i\big(t^{k+1}_i\big) \geq 0$ implies that  $\big|z_i\big(t^{k+1}_i\big)\big| > \alpha$.\\
 \text{Case 3 : } $-\alpha\leq \widetilde{z}_i\big(t^{k+1}_i\big) < 0$\\
 Take $u_j(t)=\beta,~\forall t\in [t^k_i,t^{k+1}_i],~\forall j\in S_i$. Then, by following the arguments in Case $2$, we get 
 $\big|z_i\big(t^{k+1}_i\big)\big| > \alpha$.
 
 In each of the above cases, for control $\widetilde{u}_i$, there exist $u_j \in \mathcal{U}, \forall j\in S_i$,
 such that $\big|z_i\big(t^{k+1}_i\big)\big| > \alpha$.
 This violates constraint (\ref{constraint}) and in result, contradicts the assumption that 
 $\widetilde{u}_i$ is a solution of optimization  (\ref{max_min_con})-(\ref{constraint}).
 This contradiction proves our claim that
 $\dfrac{\alpha}{\beta n_i}$ is the max-min value of optimization (\ref{max_min_con})-(\ref{constraint}) and completes the proof.

\subsection*{F.~Proof of Lemma \ref{opt_outside}}
We first show that control rule (\ref{update_instant_out})-(\ref{control_out}) satisfies constraint (\ref{constraint_out}).
Let $t^{k+1}_i$ and $u^{\ast}_i$ be as defined in (\ref{update_instant_out}) and (\ref{control_out}),  respectively. 
Define $T_i:= t^{k+1}_i - t^k_i= \dfrac{|z_i\big(t^k_i\big)|+\alpha}{2\beta n_i}$.
It is given that $z_i\big(t^k_i\big) < -\alpha$. Then, (\ref{control_out}) leads to
$u^{\ast}_i(t) = \beta,~\forall t \in [t^k_i,t^k_i+T_i]$. Recall the evolution of $z_i$ given in (\ref{z_trajectory}). By putting the expressions of $t^{k+1}_i$ and $u^{\ast}_i$ in (\ref{z_trajectory}), we get

 \begin{equation}\label{z_in_proof_1_out}
  z_i\big(t^{k+1}_i\big)=z_i\big(t^k_i\big) + \dfrac{ \big|z_i\big(t^k_i\big)\big|+\alpha}{2}    -
  \sum_{j\in S_i} \Bigg (
  \int_{t^k_i}^{t^k_i + T_i} \!\!u_j(t) ~ dt\Bigg)
 \end{equation}
 Recall that the cardinality of $S_i$ is $n_i$ and
 $|u_j(t)| \leq \beta,~\forall t\geq 0,~\forall j\in P$.
  Then, it follows from the definition of $T_i$ that
 \begin{equation}\label{u_j_term}
  \Bigg | \sum_{j\in S_i} \Bigg( \int_{t^k_i}^{t^k_i + T_i}\! \! u_j(t) ~ dt \Bigg)\Bigg | \leq \dfrac{ \big|z_i\big(t^k_i\big)\big|+\alpha}{2}
 \end{equation}
Now, (\ref{z_in_proof_1_out}) and (\ref{u_j_term}) together imply that
$z_i\big(t^{k+1}_i\big) \leq \Big( z_i\big(t^k_i\big) + |z_i\big(t^k_i\big)|+\alpha \Big)$. As $z_i\big(t^k_i\big) < 0$, i.e., $z_i\big(t^k_i\big) + |z_i\big(t^k_i\big)|=0$,  we have
$z_i\big(t^{k+1}_i\big) \leq \alpha$.
Further, it follows from the definition of $z_i$ in (\ref{local_dis}) that
$
\dot{z}_i(t) = \Big(n_i u^{\ast}_i(t) - \sum_{j \in S_i} u_j(t)\Big) = \Big(n_i \beta - \sum_{j \in S_i} u_j(t)\Big),~\forall t \in [t^k_i,t^{k+1}_i]$.
Recall again that the cardinality of $S_i$ is $n_i$ and
 $|u_j(t)| \leq \beta,~\forall t\geq 0,~\forall j\in P$.
Hence, 
$\dot{z}_i(t) \geq 0 ,~\forall t \in [t^k_i,t^{k+1}_i]$ which implies that
$z_i(t) \leq z_i\big(t^{k+1}_i\big) ,~\forall t \in [t^k_i,t^{k+1}_i]$.
Then, the above-proved fact $z_i\big(t^{k+1}_i\big) \leq \alpha$ implies that control law (\ref{update_instant_out})-(\ref{control_out})
satisfies constraint (\ref{constraint_out}).

Next, we show that control rule (\ref{update_instant_out})-(\ref{control_out}) achieves
the max-min value of optimization (\ref{max_min_time_out})-(\ref{constraint_out}).
Recall that under control rule (\ref{update_instant_out})-(\ref{control_out}), the inter-update duration is $T_i= \dfrac{|z_i\big(t^k_i\big)|+\alpha}{2\beta n_i}$. We claim that $T_i$ is the max-min value of
optimization (\ref{max_min_time_out})-(\ref{constraint_out}). 
For the sake of contradiction, assume that the max-min value of optimization 
(\ref{max_min_time_out})-(\ref{constraint_out}) is $(T_i+\epsilon)$ for some $\epsilon > 0$.
Then, $\big(t^{k+1}_i-t^k_i\big)=T_i+\epsilon$. Let the inputs of the neighbouring agents be
$u_j(t) = -\beta,~\forall t\in [t^k_i,t^{k+1}_i),~\forall j \in S_i$.
Then, it follows from the evolution of $z_i$ given in (\ref{z_trajectory}) that
\begin{equation*}
  z_i\big(t^{k+1}_i\big)=z_i\big(t^k_i\big) + n_i \int_{t^k_i}^{t^k_i+T_i+\epsilon}\!\!\!\!\!\! \beta~ dt + \sum_{j\in S_i} \Bigg ( \int_{t^k_i}^{t^k_i+T_i+\epsilon}\beta ~ dt \Bigg )
 \end{equation*}
Now, by putting the expression of $T_i$ and rearranging the terms, we get
$z_i\big(t^{k+1}_i\big)=\Big(z_i\big(t^k_i\big) + |z_i\big(t^k_i\big)| + \alpha + 2\beta n_i \epsilon\Big)$. As $z_i\big(t^k_i\big)<0$,
we have $z_i\big(t^k_i\big) + |z_i\big(t^k_i\big)|=0$.
Thus, $ z_i\big(t^{k+1}_i\big)=\big( \alpha + 2\beta n_i \epsilon \big)>  \alpha$.
This violates constraint (\ref{constraint_out}) and contradicts 
 the fact that $(T_i+\epsilon)$ is the max-min value of optimization (\ref{max_min_time_out})-(\ref{constraint_out}).
Hence, the 
inter-update duration $T_i$ under control rule (\ref{update_instant_out})-(\ref{control_out})
is the 
max-min value of optimization (\ref{max_min_time_out})-(\ref{constraint_out}). This completes the proof.

\vspace*{-0.7cm}
\begin{IEEEbiographynophoto}
{Vishal Sawant}
received the Bachelor of Engineering degree
from Mumbai University, India, in 2010 and
the M.Tech. degree from the Department of Electrical Engineering, IIT Bombay, India, in 2012.
He then worked as a software engineer for two years.
In July 2014, he started Ph.D. degree at the Department of Electrical Engineering, IIT Bombay, India. From May 2020  to September 2021, he worked as a postdoctoral researcher at 
the Cognitive
Robotics group, TU Delft, The
Netherlands. 
Currently, he is a postdoctoral researcher at the Automation and Control Section, Aalborg University, Denmark.
His research interests include multi-agent systems, cyber-physical security and distributed optimization.
\end{IEEEbiographynophoto}
\vspace*{-0.8cm}
 \begin{IEEEbiographynophoto}
{Debraj Chakraborty} received the
Bachelor’s degree from Jadhavpur University,
Kolkata, India, in 2001 and the M. Tech. degree from Indian Institute of Technology Kanpur,
Kanpur, India, in 2003, both in electrical engineering, and the Ph. D. degree in electrical and
computer engineering from University of Florida,
Gainesville, FL, USA, in 2007.
He joined the Indian Institute of Technology
Bombay, Mumbai, India, in 2007, where he is
currently a Professor in the Department of Electrical Engineering. His research interests include optimal
control, linear systems, and multiagent systems.
\end{IEEEbiographynophoto} 
\vspace*{-0.7cm}
\begin{IEEEbiographynophoto}
{Debasattam Pal} received his Bachelor of Engineering (B.E.) degree
from the Department of Electrical Engineering of Jadavpur University,
Kolkata, in 2005. He received his M.Tech. and Ph.D. degrees from the
Department of Electrical Engineering, IIT Bombay, in the years 2007
and 2012, respectively. He then worked as an Assistant Professor in
IIT Guwahati from July, 2012 to May, 2014. He joined IIT Bombay in June, 2014, where he is currently an Associate Professor in the EE Department.
His main area of research is systems and control theory. More
specifically, his areas of interest are: multidimensional systems theory,
algebraic analysis of systems, dissipative systems, optimal control and
computational algebra.
\end{IEEEbiographynophoto} 

\end{document}